\newtheorem{theorem}{Theorem}
\newtheorem{definition}[theorem]{Definition}
\newtheorem{lemma}[theorem]{Lemma}
\newenvironment{mycases} {\left\{\begin{aligned}} {\end{aligned}\right\}}
\newcommand{\NN}{\mathbb{N}}
\newcommand{\brac}[1]{\left\lbrace #1   \right\rbrace}
\newcommand{\sbrac}[1]{\left[#1\right]}
\newcommand{\pbrac}[1]{\left(#1\right)}
\newcommand{\abs}[1]{\left\vert#1\right\vert}
\newcommand{\tr}[1]{\text{Tr}\brac{#1}}
\newcommand{\id}[1]{\mathbb{I}_{#1}}
\newcommand{\Arep}[1]{A_{#1}}
\newcommand{\mylhaf}{\mathrm{lhaf}_{\!A\mkern-1mu}}
\newcommand{\myPM}{\mathrm{PM}_{\mkern-1mu G\mkern-1mu}}
\DeclareMathOperator{\lhaf}{lhaf}
\DeclareMathOperator{\PM}{PM}
\DeclareMathOperator{\supp}{supp}
\DeclareMathOperator{\sat}{sat}
\definecolor{TK}{RGB}{200,10,10}
\definecolor{NQ}{RGB}{10,10,200}
\definecolor{DC}{RGB}{10,200,10}
\newcommand{\eq}[1]{\begin{align}#1\end{align}}
\newcommand{\braket}[1]{\langle #1 \rangle}
\begin{document}
%\title{Efficient simulation of sampling from shallow linear-optical circuits with local interactions}
\title{Efficient sampling from shallow Gaussian quantum-optical circuits with local interactions}
\author{Haoyu Qi}
\affiliation{Xanadu, Toronto, ON, M5G 2C8, Canada}
\author{Diego Cifuentes}
\affiliation{Department of Mathematics, Massachusetts Institute of Technology, Cambridge, MA, 02139-4307, USA}

\author{Kamil Br\'adler}
\affiliation{Xanadu, Toronto, ON, M5G 2C8, Canada}
\author{Robert Israel}
\affiliation{Xanadu, Toronto, ON, M5G 2C8, Canada}
\author{Timjan Kalajdzievski}
\affiliation{Xanadu, Toronto, ON, M5G 2C8, Canada}
\author{Nicol\'as Quesada}
\affiliation{Xanadu, Toronto, ON, M5G 2C8, Canada}
\begin{abstract}

We prove that a classical computer can efficiently sample from the photon-number probability distribution of a Gaussian state prepared by using an optical circuit that is shallow and local. Our work generalizes previous known results for qubits to the continuous-variable domain. The key to our proof is the observation that the adjacency matrices characterizing the Gaussian states generated by shallow and local circuits have small bandwidth. To exploit this structure, we devise fast algorithms to calculate loop hafnians of banded matrices.
Since sampling from deep optical circuits with exponential-scaling photon loss is classically simulable, our results pose a challenge to the feasibility of demonstrating quantum supremacy on photonic platforms with local interactions.
\end{abstract}
\maketitle
\section{Introduction}
Gaussian Boson Sampling (GBS) is a model of quantum computation in which a multimode Gaussian state is probed using photon-number resolving (PNR) detectors~ \cite{hamilton2017gaussian,kruse2019detailed,rahimi2015can}. Originally introduced as an experimentally friendly proposal to show
quantum computational supremacy~\cite{lund2014boson,barkhofen2017driven,aaronson2011computational}, it also finds potential applications~\cite{bromley2020applications} in chemistry~\cite{huh2015boson,jahangiri2020quantum}, optimization~\cite{arrazola2018using,arrazola2018quantum,banchi2020molecular}, graph theory~\cite{schuld2020measuring,bradler2018graph,bradler2018gaussian}, non-Gaussian state preparation~\cite{sabapathy2019production,su2019conversion}, and machine learning~\cite{banchi2020training}.  Experimentally, GBS is carried out by sending squeezed (and possibly displaced) single-mode states into an interferometer which mixes them. Typically, the interferometer is implemented by applying successive layers of beamsplitters that couple nearest-neighbour modes~\cite{reck1994experimental,clements2016optimal, de2018simple}. If the number of layers is large enough, linearly depending on the number of modes to be precise, one can implement arbitrary passive unitary transformations. 
It is precisely in this regime, when strong multi-partite entanglement is prevalent, that GBS is expected to be hard to simulate for a classical computer~\cite{hamilton2017gaussian,aaronson2011computational}. 

Noise and errors are inevitable in any experimental implementation of a near-term device. For photonic architectures, photon loss is the dominant source of error since it increases exponentially with the circuit depth. Therefore, it is not too surprising that a sampling device with a deep circuit loses its computational advantage asymptotically~\cite{garcia2019simulating,qi2020regimes}. Furthermore, faster classical algorithms~\cite{neville2017classical,clifford2018classical,quesada2019simulating,clifford2020faster} and large-scale simulation results~\cite{wu2018benchmark,gupt2020classical,li2020benchmarking} demand  larger and larger circuit size or input photon number. As a result, demonstrating quantum computational supremacy on current known photonic models has become increasingly difficult. 

Naturally, the next question to ask is: can we reduce the circuit depth to mitigate the adversarial effects of photon loss, but at the same time preserve the quantum advantage? In this paper we prove that the answer is no, if there are only local interactions. The classical simulability of qubit systems with shallow circuits and local gates was proved in Ref.~\cite{jozsa2006simulation} by representing these systems as matrix product states~\cite{vidal2003efficient}. Tensor network theory is a powerful tool to characterize and study the entanglement properties of quantum states generated by local interactions~\cite{montangero2018introduction}. However, for optical systems, tensor network simulation only leads to unsatisfying quasi-polynomial algorithms due to a non-constant physical dimension, a result of the photon-bunching effect~\cite{garcia2019simulating,qi2020regimes}.

Furthermore, using tensor network methods means we need to truncate the Hilbert space, which necessarily destroys the Gaussian structure. We circumvent this issue by coarse-graining the probabilities above a certain threshold. The merits of this step are threefold: 1) it eventually leads to a strictly polynomial runtime algorithm thanks to the constant threshold; 2) It preserves the Gaussian picture in the sense that our algorithm only requires calculating loop hafnians of the adjacency matrices; 3) The coarse-graining process is actually simulating the effect of the finite resolution of PNRs \cite{lita2008counting,levine2012algorithm,hadfield2009single}. 

In our simulation algorithm, the information of the circuit depth is captured by the bandwidth of the adjacency matrices. To exploit the banded structure, we introduce a new algorithm to calculate loop hafnians of banded matrices, extending previous results for permanents~\cite{cifuentes2016efficient}. We also devise an algorithm to exploit photon collisions, which are unavoidable for shallow circuits, leading to further speedup. We believe these methods are also of interest in the study of combinatorial graph problems and data structures~\cite{barvinok2016combinatorics}.

We organize our manuscript as follows. In Sec.~\ref{sec:GBS}, we provide a brief review of Gaussian states and their representation in the Fock basis by using loop hafnians. We then modify the GBS simulation algorithm from Ref.~\cite{quesada2020exact} in Sec.~\ref{sec:sampling} to incorporate the effect of the finite resolution of PNRs. It is crucial for any GBS simulation algorithm to have a well-defined run time, otherwise the photon-number distribution would have support on the (countably infinite) non-negative integers. In Sec.~\ref{sec:shallow-cicuit}, we study shallow circuit GBS with local interactions and show that the adjacency matrices arising from our sampling algorithm enjoy a banded structure. 
We also point out that photon collisions induce repetitions in the adjacency matrices. 
We exploit both of these special structures in Sec.~\ref{sec:haf-band}. Specifically, we prove that the time cost of calculating the loop hafnian of a banded matrix depends exponentially on its bandwidth. We also present an algorithm which calculates loop hafnians faster when there are repeated columns and rows.
Finally, in Sec.~\ref{sec:main}, we put all these results together to show that a classical computer can simulate GBS with local interactions in polynomial time, if the depth of the circuit is logarithmic in the number of modes. We discuss how our methods are also applicable to Boson Sampling and present the conclusion in Sec.~\ref{sec:conclusion}.

\section{Gaussian boson sampling with finite-resolution photon number detectors}
\label{sec:GBS}

An $M$-mode Gaussian state $\rho$ is fully characterized by its $2M$-dimensional  mean vector  and its $2M\times 2M$-dimensional covariance matrix, with entries:
\begin{align}
    \alpha_j &= \tr{\rho \hat{\alpha}_j},\\
    \sigma_{ij} & = \tfrac{1}{2}\tr{\rho \left( \hat{\alpha}_i \hat{\alpha}_j + \hat{\alpha}_j \hat{\alpha}_i \right)} - \alpha_i\alpha_j.
\end{align}
The vector $\hat{\alpha}:=(\hat{a}_1,\ldots,\hat{a}_M,\hat{a}^\dagger_1,\ldots,\hat{a}_M^\dagger)$ consists of ladder operators for each mode arranged in the shown order. We refer the readers to Refs.~\cite{weedbrook2012gaussian,serafini2017quantum} for a comprehensive review on Gaussian quantum information.

Due to the ease of preparing Gaussian quantum states, GBS was proposed as a candidate to demonstrate quantum computational supremacy~\cite{hamilton2017gaussian,kruse2019detailed}. Given a Gaussian state with vector of means $\alpha$ and covariance matrix $\sigma$, it can be shown that the probability of detecting photon pattern $s=(s_1,\ldots,s_M)$ is given by~\cite{quesada2019simulating}
\begin{align}\label{eq:prob-hafnian}
    p(s) &= \frac{\exp\pbrac{-\frac{1}{2}\alpha^\dagger Q^{-1} \alpha}}{\sqrt{\det(Q)}}\frac{\lhaf(\tilde{A}_s)}{s_1!\ldots s_M!}~,\\\label{eq:A-1}
    Q & := \sigma + \mathbb{I}_{2M}/2~,\\
    A & := \begin{pmatrix}
    0 & \mathbb{I}_M \\\label{eq:A-2}
    \mathbb{I}_M & 0
    \end{pmatrix} \left(\mathbb{I}_{2M}-Q^{-1}\right)~, \\ \label{eq:A-3}
    \tilde{A}_s& := \text{fdiag}(A_s, \gamma_s)~,\\
    \gamma^T& := \alpha^\dagger Q^{-1}.
\end{align}
Here $A=A^T$ is
the \textit{adjacency matrix} of the Gaussian state $\rho$. The displacement vector modifies the usual expression for the GBS probability~\cite{hamilton2017gaussian} as follows: 1) it adds an extra Gaussian prefactor which can be efficiently calculated, 2) it fills up the diagonal elements of the $A_s$ matrix: this is precisely what the function $\text{fdiag}(\cdot,\cdot)$ does, it replaces the diagonal entries of its first argument with the entries of its second argument.

Note that the output probability in Eq.~\eqref{eq:prob-hafnian} depends on the \textit{extended adjacency matrices} $\tilde{A}_s$, which is obtained by repeating the rows and columns of $A$ to obtain $A_s$ and then filling its diagonal with $\gamma_s$. Specifically, if $s_j$ photons are detected at the $j$-th mode, then the $j$-th and $(j+M)$-th row and column of $A$ are repeated $s_j$ times to obtain $A_s$. If $s_j=0$, we simply remove the $j$-th and $(j+M)$-th rows and columns. Similarly, $\gamma_s$ is obtained from $\gamma$ by repeating its $i$ and $i+M$  entries a total of $s_i$ times. %Similarly, $\gamma_s$ is the extended version of $\gamma$.

The loop hafnian of a $d\times d$ symmetric matrix $A$ is defined as the number of perfect matchings of a weighted graph with loops that has $A$ as its adjancency matrix:
\begin{align}\label{lhaf-def}
    \lhaf(A) := \sum_{\pi\in \text{PM}(k)} \prod_{ij\in\pi} A_{ij}~,
\end{align}
where PM($k$) is the set of perfect matchings of a complete graph with loops. For a complete graph with $k$ vertices that has loops the number of perfect matchings $|\text{PM}(k)|$ is given by the $k^\text{th}$ telephone number \cite{bjorklund2019faster}. See Fig.~\ref{fig:single-pair} for a graphical description of the set PM($k=4$).

\begin{figure}[!t]
  \begin{subfloat}%[b]{0.7\textwidth}
  	
  	\begin{tikzpicture}[scale=.7, shorten >=1pt, auto, node distance=1cm, ultra thick]
  	\tikzstyle{node_style} = [circle,draw=black, inner sep=0pt, minimum size=4pt]
  	\tikzstyle{edge_style} = [-,draw=black, line width=2, thick, dashed]
  	\tikzstyle{edge_styleg} = [-,draw=orange , line width=2, ultra thick]
  	
  	\node[node_style] (v1) at (0,0) {};
  	\node[node_style] (v2) at (1,1) {};
  	\node[node_style] (v3) at (0,1) {};
  	\node[node_style] (v4) at (1,0) {};
  	
  	\draw[edge_styleg]  (v1) edge (v2);
  	\draw[edge_style]  (v1) edge (v3);
  	\draw[edge_style]  (v1) edge (v4);
  	\draw[edge_style]  (v2) edge (v3);
  	\draw[edge_style]  (v2) edge (v4);
  	\draw[edge_styleg]  (v3) edge (v4);
  	
  	\draw[edge_style]  (v1) to [loop left , in=120,out=240, looseness=10] (v1);
  	\draw[edge_style]  (v2) to [loop right, in=-60,out=60,looseness=10] (v2);
  	\draw[edge_style]  (v3) to [loop left , in=120,out=240, looseness=10] (v3);
  	\draw[edge_style]  (v4) to [loop right, in=-60,out=60,looseness=10] (v4);
  	
  	\end{tikzpicture}
  	\ 
  	\begin{tikzpicture}[scale=.7, shorten >=1pt, auto, node distance=1cm, ultra thick]
  	\tikzstyle{node_style} = [circle,draw=black, inner sep=0pt, minimum size=4pt]
  	\tikzstyle{edge_style} = [-,draw=black, line width=2, thick, dashed]
  	\tikzstyle{edge_styleg} = [-,draw=orange, line width=2, ultra thick]
  	
  	\node[node_style] (v1) at (0,0) {};
  	\node[node_style] (v2) at (1,1) {};
  	\node[node_style] (v3) at (0,1) {};
  	\node[node_style] (v4) at (1,0) {};
  	
  	\draw[edge_style]  (v1) edge (v2);
  	\draw[edge_styleg]  (v1) edge (v3);
  	\draw[edge_style]  (v1) edge (v4);
  	\draw[edge_style]  (v2) edge (v3);
  	\draw[edge_styleg]  (v2) edge (v4);
  	\draw[edge_style]  (v3) edge (v4);
  	
  	\draw[edge_style]  (v1) to [loop left , in=120,out=240, looseness=10] (v1);
  	\draw[edge_style]  (v2) to [loop right, in=-60,out=60,looseness=10] (v2);
  	\draw[edge_style]  (v3) to [loop left , in=120,out=240, looseness=10] (v3);
  	\draw[edge_style]  (v4) to [loop right, in=-60,out=60,looseness=10] (v4);
  	\end{tikzpicture}
  	\
  	\begin{tikzpicture}[scale=.7, shorten >=1pt, auto, node distance=1cm, ultra thick]
  	\tikzstyle{node_style} = [circle,draw=black, inner sep=0pt, minimum size=4pt]
  	\tikzstyle{edge_style} = [-,draw=black, line width=2, thick, dashed]
  	\tikzstyle{edge_styleg} = [-,draw=orange, line width=2, ultra thick]
  	
  	\node[node_style] (v1) at (0,0) {};
  	\node[node_style] (v2) at (1,1) {};
  	\node[node_style] (v3) at (0,1) {};
  	\node[node_style] (v4) at (1,0) {};
  	
  	\draw[edge_style]  (v1) edge (v2);
  	\draw[edge_style]  (v1) edge (v3);
  	\draw[edge_styleg]  (v1) edge (v4);
  	\draw[edge_styleg]  (v2) edge (v3);
  	\draw[edge_style]  (v2) edge (v4);
  	\draw[edge_style]  (v3) edge (v4);
  	
  	\draw[edge_style]  (v1) to [loop left , in=120,out=240, looseness=10] (v1);
  	\draw[edge_style]  (v2) to [loop right, in=-60,out=60,looseness=10] (v2);
  	\draw[edge_style]  (v3) to [loop left , in=120,out=240, looseness=10] (v3);
  	\draw[edge_style]  (v4) to [loop right, in=-60,out=60,looseness=10] (v4);
  	
  	\end{tikzpicture}
  	\
  	\begin{tikzpicture}[scale=.7, shorten >=1pt, auto, node distance=1cm, ultra thick]
  	\tikzstyle{node_style} = [circle,draw=black, inner sep=0pt, minimum size=4pt]
  	\tikzstyle{edge_style} = [-,draw=black, line width=2, thick, dashed]
  	\tikzstyle{edge_styleg} = [-,draw=orange, line width=2, ultra thick]
  	
  	\node[node_style] (v1) at (0,0) {};
  	\node[node_style] (v2) at (1,1) {};
  	\node[node_style] (v3) at (0,1) {};
  	\node[node_style] (v4) at (1,0) {};
  	
  	\draw[edge_style]  (v1) edge (v2);
  	\draw[edge_style]  (v1) edge (v3);
  	\draw[edge_style]  (v1) edge (v4);
  	\draw[edge_style]  (v2) edge (v3);
  	\draw[edge_style]  (v2) edge (v4);
  	\draw[edge_styleg]  (v3) edge (v4);
  	
  	\draw[edge_styleg]  (v1) to [loop left , in=120,out=240, looseness=10] (v1);
  	\draw[edge_styleg]  (v2) to [loop right, in=-60,out=60,looseness=10] (v2);
  	\draw[edge_style]  (v3) to [loop left , in=120,out=240, looseness=10] (v3);
  	\draw[edge_style]  (v4) to [loop right, in=-60,out=60,looseness=10] (v4);
  	\end{tikzpicture}
  	\
  	\begin{tikzpicture}[scale=.7, shorten >=1pt, auto, node distance=1cm, ultra thick]
  	\tikzstyle{node_style} = [circle,draw=black, inner sep=0pt, minimum size=4pt]
  	\tikzstyle{edge_style} = [-,draw=black, line width=2, thick, dashed]
  	\tikzstyle{edge_styleg} = [-,draw=orange, line width=2, ultra thick]
  	\node[node_style] (v1) at (0,0) {};
  	\node[node_style] (v2) at (1,1) {};
  	\node[node_style] (v3) at (0,1) {};
  	\node[node_style] (v4) at (1,0) {};
  	
  	\draw[edge_styleg]  (v1) edge (v2);
  	\draw[edge_style]  (v1) edge (v3);
  	\draw[edge_style]  (v1) edge (v4);
  	\draw[edge_style]  (v2) edge (v3);
  	\draw[edge_style]  (v2) edge (v4);
  	\draw[edge_style]  (v3) edge (v4);
  	
  	\draw[edge_style]  (v1) to [loop left , in=120,out=240, looseness=10] (v1);
  	\draw[edge_style]  (v2) to [loop right, in=-60,out=60,looseness=10] (v2);
  	\draw[edge_styleg]  (v3) to [loop left , in=120,out=240, looseness=10] (v3);
  	\draw[edge_styleg]  (v4) to [loop right, in=-60,out=60,looseness=10] (v4);
  	
  	\end{tikzpicture}
  	
  	\begin{tikzpicture}[scale=.7, shorten >=1pt, auto, node distance=1cm, ultra thick]
  	\tikzstyle{node_style} = [circle,draw=black, inner sep=0pt, minimum size=4pt]
  	\tikzstyle{edge_style} = [-,draw=black, line width=2, thick, dashed]
  	\tikzstyle{edge_styleg} = [-,draw=orange, line width=2, ultra thick]
  	\node[node_style] (v1) at (0,0) {};
  	\node[node_style] (v2) at (1,1) {};
  	\node[node_style] (v3) at (0,1) {};
  	\node[node_style] (v4) at (1,0) {};
  	
  	\draw[edge_style]  (v1) edge (v2);
  	\draw[edge_styleg]  (v1) edge (v3);
  	\draw[edge_style]  (v1) edge (v4);
  	\draw[edge_style]  (v2) edge (v3);
  	\draw[edge_style]  (v2) edge (v4);
  	\draw[edge_style]  (v3) edge (v4);
  	
  	\draw[edge_style]  (v1) to [loop left , in=120,out=240, looseness=10] (v1);
  	\draw[edge_styleg]  (v2) to [loop right, in=-60,out=60,looseness=10] (v2);
  	\draw[edge_style]  (v3) to [loop left , in=120,out=240, looseness=10] (v3);
  	\draw[edge_styleg]  (v4) to [loop right, in=-60,out=60,looseness=10] (v4);
  	
  	\end{tikzpicture}
  	\
  	\begin{tikzpicture}[scale=.7, shorten >=1pt, auto, node distance=1cm, ultra thick]
  	\tikzstyle{node_style} = [circle,draw=black, inner sep=0pt, minimum size=4pt]
  	\tikzstyle{edge_style} = [-,draw=black, line width=2, thick, dashed]
  	\tikzstyle{edge_styleg} = [-,draw=orange, line width=2, ultra thick]
  	\node[node_style] (v1) at (0,0) {};
  	\node[node_style] (v2) at (1,1) {};
  	\node[node_style] (v3) at (0,1) {};
  	\node[node_style] (v4) at (1,0) {};
  	
  	\draw[edge_style]  (v1) edge (v2);
  	\draw[edge_style]  (v1) edge (v3);
  	\draw[edge_style]  (v1) edge (v4);
  	\draw[edge_style]  (v2) edge (v3);
  	\draw[edge_styleg]  (v2) edge (v4);
  	\draw[edge_style]  (v3) edge (v4);
  	
  	\draw[edge_styleg]  (v1) to [loop left , in=120,out=240, looseness=10] (v1);
  	\draw[edge_style]  (v2) to [loop right, in=-60,out=60,looseness=10] (v2);
  	\draw[edge_styleg]  (v3) to [loop left , in=120,out=240, looseness=10] (v3);
  	\draw[edge_style]  (v4) to [loop right, in=-60,out=60,looseness=10] (v4);
  	\end{tikzpicture}
  	\
  	\begin{tikzpicture}[scale=.7, shorten >=1pt, auto, node distance=1cm, ultra thick]
  	\tikzstyle{node_style} = [circle,draw=black, inner sep=0pt, minimum size=4pt]
  	\tikzstyle{edge_style} = [-,draw=black, line width=2, thick, dashed]
  	\tikzstyle{edge_styleg} = [-,draw=orange, line width=2, ultra thick]
  	\node[node_style] (v1) at (0,0) {};
  	\node[node_style] (v2) at (1,1) {};
  	\node[node_style] (v3) at (0,1) {};
  	\node[node_style] (v4) at (1,0) {};
  	
  	\draw[edge_style]  (v1) edge (v2);
  	\draw[edge_style]  (v1) edge (v3);
  	\draw[edge_style]  (v1) edge (v4);
  	\draw[edge_styleg]  (v2) edge (v3);
  	\draw[edge_style]  (v2) edge (v4);
  	\draw[edge_style]  (v3) edge (v4);
  	
  	\draw[edge_styleg]  (v1) to [loop left , in=120,out=240, looseness=10] (v1);
  	\draw[edge_style]  (v2) to [loop right, in=-60,out=60,looseness=10] (v2);
  	\draw[edge_style]  (v3) to [loop left , in=120,out=240, looseness=10] (v3);
  	\draw[edge_styleg]  (v4) to [loop right, in=-60,out=60,looseness=10] (v4);
  	
  	\end{tikzpicture}
  	\
  	\begin{tikzpicture}[scale=.7, shorten >=1pt, auto, node distance=1cm, ultra thick]
  	\tikzstyle{node_style} = [circle,draw=black, inner sep=0pt, minimum size=4pt]
  	\tikzstyle{edge_style} = [-,draw=black, line width=2, thick, dashed]
  	\tikzstyle{edge_styleg} = [-,draw=orange, line width=2, ultra thick]
  	\node[node_style] (v1) at (0,0) {};
  	\node[node_style] (v2) at (1,1) {};
  	\node[node_style] (v3) at (0,1) {};
  	\node[node_style] (v4) at (1,0) {};
  	
  	\draw[edge_style]  (v1) edge (v2);
  	\draw[edge_style]  (v1) edge (v3);
  	\draw[edge_styleg]  (v1) edge (v4);
  	\draw[edge_style]  (v2) edge (v3);
  	\draw[edge_style]  (v2) edge (v4);
  	\draw[edge_style]  (v3) edge (v4);
  	
  	\draw[edge_style]  (v1) to [loop left , in=120,out=240, looseness=10] (v1);
  	\draw[edge_styleg]  (v2) to [loop right, in=-60,out=60,looseness=10] (v2);
  	\draw[edge_styleg]  (v3) to [loop left , in=120,out=240, looseness=10] (v3);
  	\draw[edge_style]  (v4) to [loop right, in=-60,out=60,looseness=10] (v4);
  	
  	\end{tikzpicture}
  	\
  	\begin{tikzpicture}[scale=.7, shorten >=1pt, auto, node distance=1cm, ultra thick]
  	\tikzstyle{node_style} = [circle,draw=black, inner sep=0pt, minimum size=4pt]
  	\tikzstyle{edge_style} = [-,draw=black, line width=2, thick, dashed]
  	\tikzstyle{edge_styleg} = [-,draw=orange, line width=2, ultra thick]
  	\node[node_style] (v1) at (0,0) {};
  	\node[node_style] (v2) at (1,1) {};
  	\node[node_style] (v3) at (0,1) {};
  	\node[node_style] (v4) at (1,0) {};
  	
  	\draw[edge_style]  (v1) edge (v2);
  	\draw[edge_style]  (v1) edge (v3);
  	\draw[edge_style]  (v1) edge (v4);
  	\draw[edge_style]  (v2) edge (v3);
  	\draw[edge_style]  (v2) edge (v4);
  	\draw[edge_style]  (v3) edge (v4);
  	
  	\draw[edge_styleg]  (v1) to [loop left , in=120,out=240, looseness=10] (v1);
  	\draw[edge_styleg]  (v2) to [loop right, in=-60,out=60,looseness=10] (v2);
  	\draw[edge_styleg]  (v3) to [loop left , in=120,out=240, looseness=10] (v3);
  	\draw[edge_styleg]  (v4) to [loop right, in=-60,out=60,looseness=10] (v4);
  	
  	\end{tikzpicture}
  \end{subfloat}
  
  \caption{\label{fig:single-pair}
  Perfect matchings including self-loops as defined in Eq.~\eqref{lhaf-def} for a complete graph with four vertices. The number of perfect matchings can also be obtained by using the single-variable signless matching polynomial. In this case, the corresponding matching polynomial is given by $\mu(z) = z^4 +6z^2 +3$~\cite{bradler2019duality}, and by setting $z=1$ we recover the same result as given by the loop hafnian calculation.}
\end{figure}
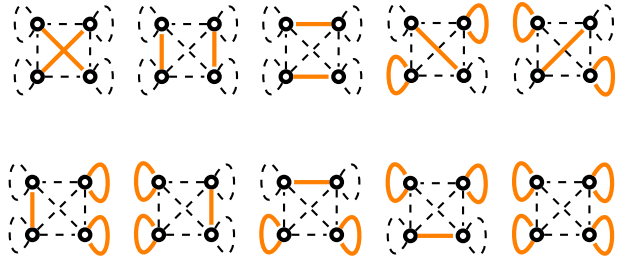

It might be worthwhile to point out that the loop hafnian is closely related to the \textit{multivariate (signless) matching polynomial} of the corresponding graph, a well-known quantity in graph theory~\cite{barvinok2016combinatorics}. The loop hafnian is equal to the matching
polynomial evaluated at a certain point~\cite{schuld2020measuring,bradler2019duality}. The matching polynomials could be a powerful tool since they enjoy a number of recursive relations~\cite{shi2016graph} and, in general, contain more
information about the graph than the loop hafnian.

The GBS task is to output samples according to the distribution $\brac{p(s)}$. Note that the support of this probability distribution is $\mathbb{N}^M$, i.e., the cartesian product of the non-negative integers $M$ times. A sampling problem over an infinite sample space can be ill-defined if we are aiming for a worst-case run-time analysis, since there is non-zero probability of detecting an arbitrary high number of photons. 
%Any simulation algorithm of such task necessarily takes infinite run time in the worst-case scenario.

One might attempt to redefine the computational task as sampling over a post-selected distribution with fixed total photon number. Although such modification renders the sampling task well-defined, devising a classical algorithm to simulate such post-selected distribution might be challenging for general GBS scenarios~\cite{wu2020speedup}. This is because a system with fixed number of photons is described naturally by the particle representation, while Gaussian systems, with indefinite total photon number, are formulated under the mode representation.

However, such an issue should not be relevant for any experimental implementation, since a realistic device with finite energy cannot probe the system with infinite precision, nor can it detect an infinite amount of photons. Any realistic photon-number detector only has finite resolution power, i.e., it can only distinguish incoming photons up to a certain \emph{finite} number \cite{lita2008counting,levine2012algorithm,hadfield2009single}. We say that a detector is overloaded if the number of incoming photons is actually beyond the resolution of the detector. The special case where the detectors are overloaded by one or more photons has been studied in detail in Ref.~\cite{quesada2018gaussian}.

Therefore, to circumvent the divergence problem, we propose to modify the sampling task to incorporate the finite resolution of the detectors. Specifically, we modify the distribution by coarse-graining all photon patterns that overload at least one of the detectors. A two-mode example is shown in Fig.~\ref{fig:dist-2mode}.
Formally, we define Gaussian Boson Threshold Sampling (GBTS), as follows:
\begin{definition}[GBTS]
Consider an $M$-mode Gaussian state probed with PNR detectors with resolution $c$. Denote $\Sigma_c$ as the set of photon patterns which overload at least one PNR:
\begin{align}
    \Sigma_c &= \bigcup_{j=c}^\infty\Sigma_j~,\\
    \Sigma_j &= \brac{{s}:\exists j~ s_j>c}~.
\end{align}
The GBTS computational task is to output a sample from the following distribution:
\begin{align}\label{eq:GBTS-dist}
        \tilde{p}(x)=
        \begin{cases}
        p(s), \quad x={s}\notin \Sigma_c~,\\
        \sum_{s\in\Sigma_c}p(s), \quad x=\#~.
        \end{cases}
\end{align}
Here $p(s)$ is the output probability of the corresponding ideal GBS model given in Eq.~\eqref{eq:prob-hafnian}. We use the symbol $\#$ to indicate that at least one of the detectors is overloaded.
\end{definition}
Note that one can obtain bounds for the probability of occurrence of the $\#$ event in polynomial time by using marginal probabilities as shown in Appendix B of Ref.~\cite{quesada2020exact}.

\begin{figure}[!t]
    \centering
    \includegraphics[scale=0.6]{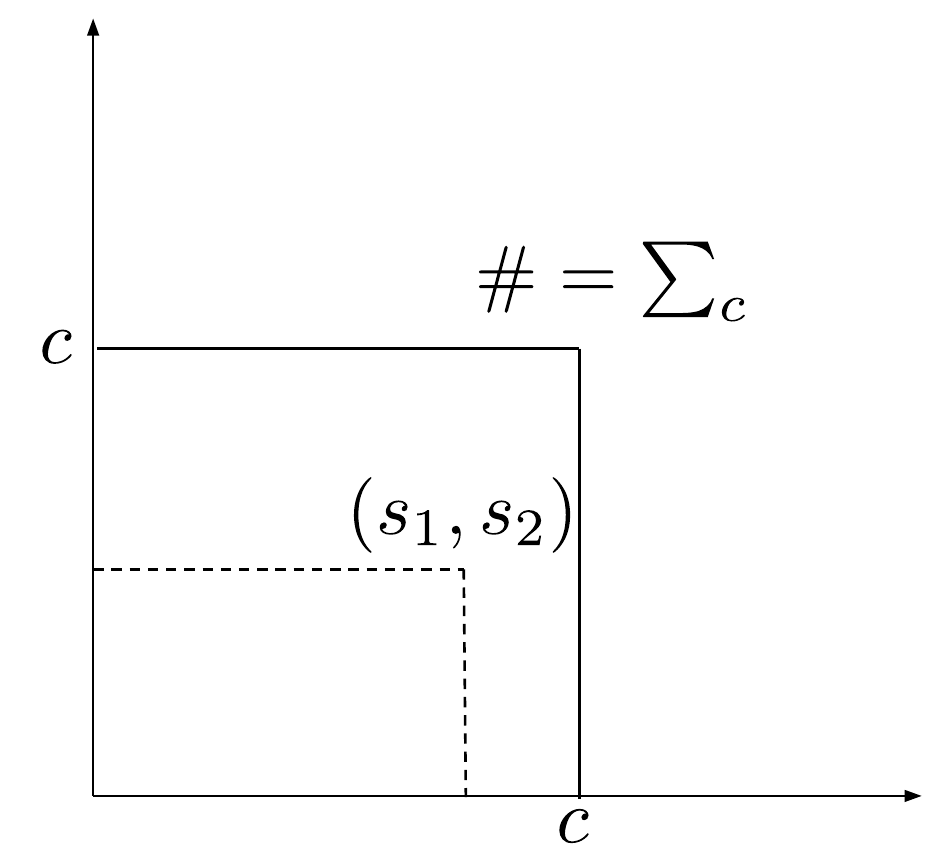}
    \caption{A two-mode example of the distribution of GBTS. We coarse-grain all outcomes outside  the `box' $[0,c] \times [0,c]$ into one event labelled by $\#$.}
    \label{fig:dist-2mode}
\end{figure}

\section{Simulating Gaussian boson sampling by calculating a polynomial number of probabilities }
\label{sec:sampling}

Naively, one can simulate any quantum device by calculating exponentially many probabilities. However, since the hardness of sampling originates from the hardness of calculating one probability~\cite{aaronson2011computational,dalzell2020many}, such a brute-force algorithm is expected to be far from optimal. Indeed, several fast algorithms to simulate Boson Sampling have been proposed, which output a sample by only calculating a polynomial number of probabilities \cite{neville2017classical,clifford2018classical,clifford2020faster}. For GBS, the authors of Ref.~\cite{quesada2019simulating} devised an algorithm which fully exploits the Gaussian nature of the system, in particular the fact that the reduced Gaussian states can be easily obtained. With a small modification, that algorithm can be used to sample exactly from the distribution of our GBTS problem.

The essence of the algorithm is to break down the original sampling problem into a chain of smaller GBS problems. At the $k$-th step, we sample the number of photons detected at the $k$-th mode, conditioned on the number of photons already sampled at previous steps. Specifically, at the $k$-th step, we sample from the following distribution:
\begin{align}
    q^{(k)}(x) = \begin{cases}
    p(x\vert s_1\ldots s_{k-1})~,\text{ for} ~~x=0,1,\ldots, c~.\\
    1 - \sum_{i\leq c}q^{(k)}(i)~,\text{  for} ~~x  = \text{`}>\text{'}~.
    \end{cases}
\end{align}
Here the outcome labelled by `$>$' represents all events that overload the $k$-th photon detector. If such outcome is produced, we simply output `$\#$' and exit the algorithm. This procedure is outlined in Algorithm~\ref{alg:GBTS}. The correctness of our algorithm, i.e., that it indeed samples according to the distribution given in Eq.~\eqref{eq:GBTS-dist}, is shown in Appendix~\ref{sec:append-correctness-algorithm}.

\begin{algorithm}[!t]
\caption{Simulating GBTS}\label{alg:GBTS}
\begin{algorithmic}[1]
\Procedure{GTBS}{}
\State $s\gets$ Empty array of length $M$
\For{$k\in [1,M]$}\Comment{Sample the number of photons in the $k$-th mode}
    \State $q^{(k)}\gets$ Empty array of length $c+2$
    \For{$x \in [0,\ldots, c,>]$}\Comment{Store the conditional distribution}
        \If{$i\leq c+1$}
            \State $q^{(k)}[i] \gets p(i\vert s[1]\ldots s[k-1])$\newcommand*\mean[1]{\bar{#1}}
        \Else
            \State $q^{(k)}[i]\gets 1-\sum_{i\leq c}q^{(k)}[i]$
        \EndIf
    \EndFor
    \State $s[k]\gets$\textsc{Sample}$(q^{(k)})$
    \If{$s[k]==c+1$}\Comment{Overloaded}
        \State\textbf{return} '$\#$'\Comment{so we coarse-grain it into event $\#$}
    \EndIf
\EndFor
\State \textbf{return} $s$
\EndProcedure
\end{algorithmic}
\end{algorithm}

We calculate the conditional probabilities by writing them as ratios of marginal probabilities:
\begin{align}
    p(x\vert s_1,\ldots,s_{k-1}) = \frac{p(s_1,\ldots,s_k)}{p(s_1,\ldots,s_{k-1})}~, x \leq c~.
\end{align}
The denominator comes for free since it must be already calculated at the previous step. The numerator is nothing but the output probability of the reduced Gaussian state on the first $k$ modes: 
\begin{align}
p(s_1,\ldots,s_k) =& \frac{\exp\pbrac{-\frac{1}{2}\alpha^{(k) \dagger} \{Q^{(k)}\}^{-1} \alpha^{(k)}}}{\sqrt{\det(Q^{(k)})}} \\
& \times \nonumber 
\frac{\lhaf\left[\left( \tilde{A}^{(k)} \right)_{ s_1,\ldots,s_k} \right]}{s_1!\ldots s_k!}~.
\end{align}
We use superscript $(k)$ to denote quantities associated with the reduced state of the first $k$-modes. As mentioned before, calculating the mean vector (covariance matrix) of the reduced state is particularly simple for a Gaussian system: to trace off the $j$-th mode, we simply remove the $j$-th and $j+M$-th rows (and columns) of the mean vector $\alpha$ and the covariance matrix $\sigma$ (or $Q$).

This concludes the presentation of our sampling algorithm. We illustrate the basic idea behind the algorithm in Fig.~\ref{fig:algorithm}.
It is not difficult to see that our algorithm only needs to calculate $Mc$ probabilities to output one sample. This is an exponential speed-up compared to the algorithms calculating all $c^M$ probabilities.
\begin{figure}[!t]
    \centering
    \includegraphics[width=0.48\textwidth]{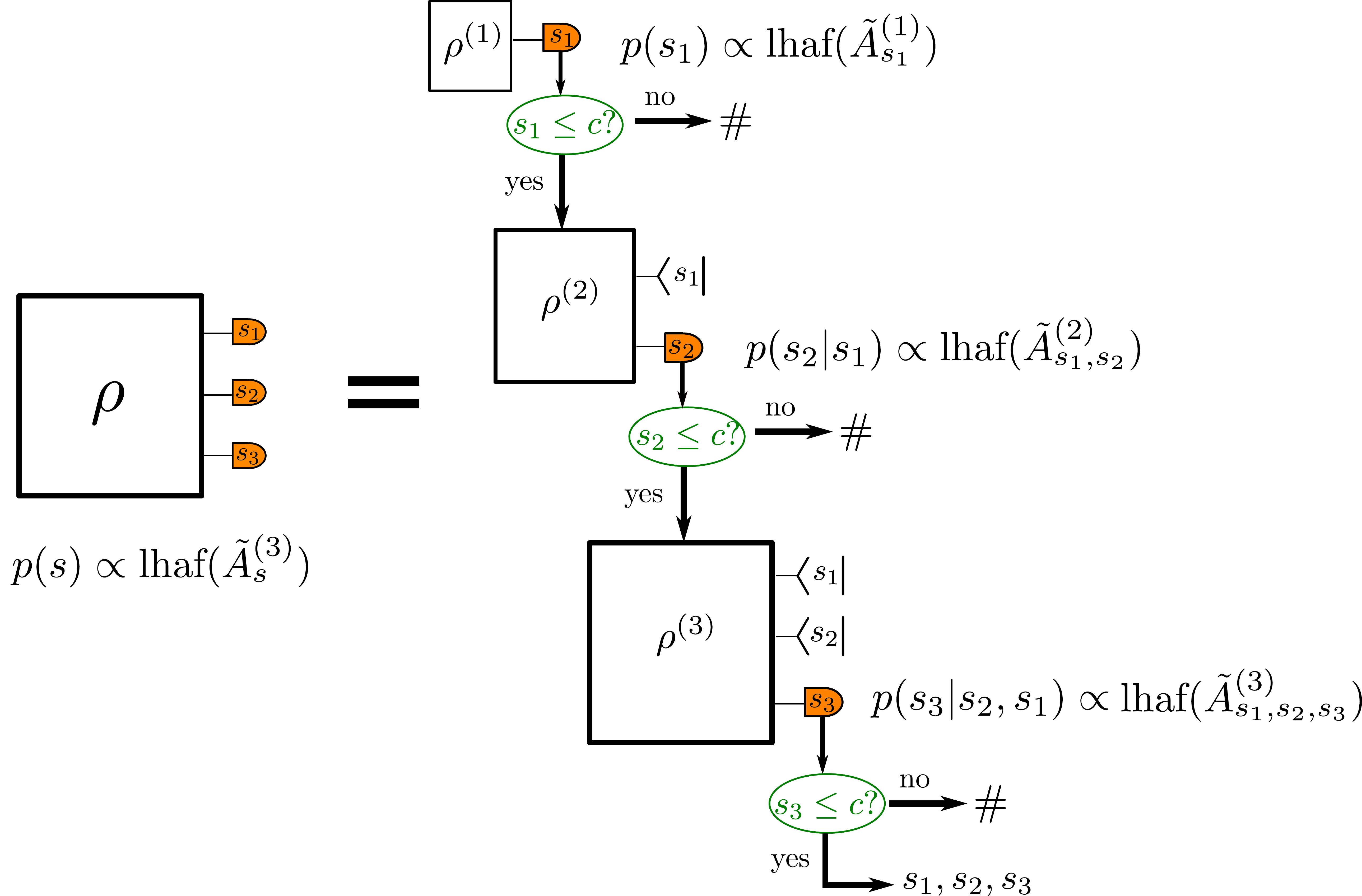}
    \caption{Block diagram of our algorithm for 3 modes. We explicitly show the control statements incorporating the possibility of overloading any detector.
    \label{fig:algorithm}}
\end{figure}
To explicitly write down the run time of this algorithm, we need to know how fast can we calculate each probability (loop hafnian). To answer this question, we need to first study the structures of the (extended) adjacency matrices appearing at each step of our algorithm.

\section{Adjacency matrices of Shallow circuits with local gates}
\label{sec:shallow-cicuit}
Formally, an $M$-mode optical circuit is called shallow if the number of layers of optical elements scales as $D=\log(M)$. Here, we count a group of commuting gates as one layer. We say that a two-mode gate or an optical element is local, if it acts on two adjacent modes.

Intuitively, a shallow circuit generates less quantum entanglement compared to a deep one: we expect that such system is easy to simulate. It turns out this is true, and the key to prove it lies in the observation that the relevant adjacency matrices involved in sampling from such a shallow circuit have banded structures which we can exploit to speed up the simulation.

\begin{definition}
A matrix $A$ is \textit{banded} with bandwidth $w$ if
$A_{i,j}=0$ for all $\abs{i-j}>w$. 

\end{definition}

We would like to show that, for GBS with a shallow circuit of local gates, the extended adjacency matrices of each reduced Gaussian state are banded matrices. We first prove that the unitary matrix is banded.

\begin{lemma}
\label{lem:U-band}Consider an $M$-mode optical circuit with $D$ layers of local gates, the associated unitary transformation is a banded matrix with bandwidth $D$.
\end{lemma}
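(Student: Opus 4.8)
The plan is to prove the bandwidth bound by induction on the number of layers $D$, tracking how the support of each row and column of the accumulated unitary grows as we compose one more layer of local gates. Write the total unitary as $U = U_D U_{D-1} \cdots U_1$, where $U_\ell$ is the unitary implementing the $\ell$-th layer. The base case is $D=0$: the empty circuit gives $U = \mathbb{I}_M$, which is banded with bandwidth $0$. For a single layer $U_\ell$, since every gate in the layer is local (acts on adjacent modes $j,j+1$) and the gates within a layer commute — so they act on disjoint pairs of modes — the matrix $U_\ell$ is block-diagonal with $1\times 1$ and $2\times 2$ blocks, hence banded with bandwidth $1$: $(U_\ell)_{i,j}=0$ whenever $|i-j|>1$.

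For the inductive step, suppose $V := U_{D-1}\cdots U_1$ is banded with bandwidth $D-1$, and let $U = U_D V$. I would compute
\begin{align}
U_{i,j} = \sum_{m} (U_D)_{i,m}\, V_{m,j}.
\end{align}
A term is nonzero only if $(U_D)_{i,m}\neq 0$ and $V_{m,j}\neq 0$, which forces $|i-m|\le 1$ and $|m-j|\le D-1$. By the triangle inequality $|i-j| \le |i-m| + |m-j| \le 1 + (D-1) = D$, so $U_{i,j}=0$ whenever $|i-j|>D$. This shows $U$ is banded with bandwidth $D$, completing the induction.

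I do not expect a serious obstacle here; the argument is essentially the standard fact that bandwidths add under matrix multiplication, combined with the observation that a layer of commuting local gates has bandwidth $1$. The only point that needs a little care is making precise what "a layer of local gates" means as a linear map on the single-mode amplitudes: one should note that the passive linear-optical transformation on $M$ modes induced by beamsplitters (and phase shifters) acting on disjoint adjacent pairs is exactly the direct sum of the corresponding $2\times 2$ (and $1\times 1$) unitaries, placed on the diagonal, so its matrix entries vanish outside the tridiagonal band. Once that is granted, the induction is immediate, and in fact the same bookkeeping will be reused later when we pass from $U$ to the adjacency matrix $A$ and its reduced/extended versions.
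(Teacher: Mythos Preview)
Your proposal is correct and follows essentially the same approach as the paper: both argue that a single layer of commuting local gates is block-diagonal with $1\times 1$ and $2\times 2$ blocks (hence bandwidth~$1$), and then use that bandwidths add under matrix multiplication to conclude bandwidth~$D$ after $D$ layers. Your version simply spells out the additivity via the triangle inequality on indices, whereas the paper states it as a known fact.
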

\begin{proof}
Observe that one layer of local (two-mode and single-mode) gates is represented by a unitary matrix with bandwidth equal to $1$. This is because each gate at most entangles two adjacent modes. Since the bandwidth is additive under matrix multiplication, repeatedly applying this property, we end up with a banded unitary matrix with bandwidth $D$.
\end{proof}

\begin{lemma}
\label{lemma:dragonfly} Consider an $M$-mode GBS circuit with D layers of local gates and uniform loss.
The adjacency matrix of the first $k$, $1\leq k\leq M$, modes has the form
\eq{\label{eq:A-lossy-2}
A^{(k)} & =\begin{pmatrix}B^{(k)} & C^{(k)}\\
(C^{(k)} )^T & (B^{(k)})^{*}
\end{pmatrix},
}
where $B^{(k)}$ and  $C^{(k)}$  have bandwidth $w \leq 4D$.
\end{lemma}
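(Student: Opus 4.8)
The plan is to track how the adjacency matrix $A$ is built from the Gaussian-state data and show that each operation in that construction can only increase the bandwidth by a controlled amount. Recall from Eqs.~\eqref{eq:A-1}--\eqref{eq:A-2} that $A = X(\mathbb{I}_{2M}-Q^{-1})$ with $X = \begin{pmatrix}0 & \mathbb{I}_M\\ \mathbb{I}_M & 0\end{pmatrix}$, and that $Q = \sigma + \mathbb{I}_{2M}/2$ where $\sigma$ is the covariance matrix of the state. For a GBS circuit consisting of single-mode squeezers followed by a passive interferometer $U$ (the unitary of Lemma~\ref{lem:U-band}, with bandwidth $D$) and uniform loss, the covariance matrix has the standard form in which the only ``non-diagonal'' ingredient is $U$ (and $U^*$) conjugating a diagonal matrix of squeezing/loss parameters. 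Concretely, in the $(\hat a,\hat a^\dagger)$ ordering one gets a block matrix whose blocks are of the form $U\,\Lambda\,U^\dagger$ or $U\,\Lambda\,U^T$ for diagonal $\Lambda$. The first step is therefore to write $A$ explicitly in terms of $U$: since loss and squeezing commute through in the Gaussian picture, one finds that $A$ itself has the block form \eqref{eq:A-lossy-2} with $B = U\,\Lambda_1\,U^T$ and $C = U\,\Lambda_2\,U^\dagger$ for suitable diagonal matrices $\Lambda_1,\Lambda_2$ (this is where I'd do the one-line covariance-matrix bookkeeping, invoking the known closed form for the lossy-squeezed-state $A$ matrix).

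The second step is the bandwidth arithmetic. Bandwidth is subadditive under matrix multiplication: if $P$ has bandwidth $w_1$ and $R$ has bandwidth $w_2$ then $PR$ has bandwidth $w_1+w_2$. Diagonal matrices have bandwidth $0$, and $U^T, U^*, U^\dagger$ all have the same bandwidth as $U$, namely $\leq D$ by Lemma~\ref{lem:U-band}. Hence $B^{(M)} = U\Lambda_1 U^T$ and $C^{(M)} = U\Lambda_2 U^\dagger$ each have bandwidth $\leq D + 0 + D = 2D$. (The factor $4D$ in the statement leaves slack; I would expect the clean bound to be $2D$ for the full $M$-mode state, and the extra factor of two in the stated $w\leq 4D$ is presumably there to absorb the passage to reduced states — see below — or a more conservative counting of how many layers of beamsplitters are needed to realize a local gate acting on both quadratures.)

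The third step is to pass from $A = A^{(M)}$ to the reduced adjacency matrix $A^{(k)}$ on the first $k$ modes. As explained in Sec.~\ref{sec:sampling}, tracing out modes $k+1,\dots,M$ amounts to deleting rows/columns $j$ and $j+M$ for $j>k$ from $\sigma$ (equivalently from $Q$); but one must be careful that $A^{(k)}$ is the adjacency matrix of the \emph{reduced} state, i.e.\ $A^{(k)} = X_k(\mathbb{I}_{2k} - (Q^{(k)})^{-1})$ with $Q^{(k)}$ the principal submatrix of $Q$ — and inversion does not in general preserve bandedness. The clean way around this is to observe that $Q^{(k)}$, being a principal submatrix of $Q$, is itself $\mathbb{I}/2$ plus the covariance matrix of a state obtained by the \emph{same kind of circuit} restricted to $k$ modes: deleting the last $M-k$ modes of $U$ yields a submatrix of $U$, which — as a submatrix of a bandwidth-$D$ matrix — still has bandwidth $\leq D$, and the diagonal $\Lambda$'s just get truncated. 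So $A^{(k)}$ has exactly the same product structure $U^{(k)}\Lambda^{(k)}_1 (U^{(k)})^T$ etc.\ with a banded (bandwidth $\leq D$) factor, and the bandwidth bound of step two applies verbatim. This is the step I expect to be the main obstacle, since the naive ``$A^{(k)}$ = submatrix of $A$'' is false (the inverse is global), so the argument has to be routed through the covariance matrix / $Q$-matrix level where restriction genuinely corresponds to taking submatrices, and one must check that the closed form for $A$ in terms of $U$ and $\Lambda$ is preserved under this restriction. Once that is in place, combining steps one through three gives $B^{(k)}, C^{(k)}$ of bandwidth $w \leq 2D \leq 4D$, which proves the lemma.
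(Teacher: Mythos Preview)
Your steps one and two are essentially what the paper does for the full $M$-mode state: $A$ has blocks $B=U\Lambda_1 U^T$ and $C=U\Lambda_2 U^\dagger$ with diagonal $\Lambda_i$, so bandwidth subadditivity gives $2D$ for each block.

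The gap is in step three. Your proposed fix --- that the reduced state on the first $k$ modes is ``obtained by the same kind of circuit restricted to $k$ modes,'' with $U$ and the $\Lambda$'s simply truncated --- is false. The reduced covariance matrix is $\sigma^{(k)} = (E_k U)\,T\,(E_k U)^\dagger$ where $E_k U$ is the $k\times M$ matrix of the first $k$ rows of~$U$; the inner index still runs over all $M$ modes, so this is \emph{not} $U'\,T'\,(U')^\dagger$ for some $k\times k$ unitary $U'$ obtained by truncating~$U$. (A two-mode counterexample: trace out one output of a beamsplitter acting on two squeezed modes; the reduced variance depends on \emph{both} squeezing parameters.) Consequently there is no reason for $(Q^{(k)})^{-1}$, and hence $A^{(k)}$, to inherit the product form $U'\Lambda'(U')^T$ with banded~$U'$, and your bandwidth arithmetic cannot be applied verbatim.

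The paper handles exactly this point with the Sherman--Morrison--Woodbury identity. Writing $Q^{(k)} = \tfrac12\mathbb{I}_{2k} + W_k\,\sigma\,W_k^\dagger$, one gets
\[
\bigl(Q^{(k)}\bigr)^{-1}
= 2\,\mathbb{I}_{2k} - 4\,W_k\,\bigl(\sigma^{-1} + 2\,W_k^\dagger W_k\bigr)^{-1} W_k^\dagger .
\]
The key observation is then that in the basis where $V=U^*\oplus U$ is peeled off, the bracketed $2M\times 2M$ matrix is diagonal except for a single $2D\times 2D$ block (coming from the fact that $V^\dagger(W_k^\dagger W_k)V$ only disturbs a window of width $2D$ around index~$k$, by the bandedness of~$U$). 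Inverting such a matrix keeps the same ``diagonal plus $2D$-block'' shape, and conjugating back by $V$ then gives block bandwidth at most $4D$. This is where the $4D$ (rather than $2D$) in the statement actually comes from --- it is not slack, but the price of the reduction step. Your proposal correctly flags step three as the main obstacle, but the mechanism you sketch for overcoming it does not work; the SMW route (or something equivalent) is needed.
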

\begin{proof}
We will show that the inverse of $Q^{(k)}$,  which is the covariance matrix of the first $k$ modes, is block banded. If this is the case, it trivially follows that $A^{(k)}$ is block banded as it is given by 
\begin{align}
A^{(k)} & := \begin{pmatrix}
0 & \mathbb{I}_k \\
\mathbb{I}_k & 0
\end{pmatrix} \left[\mathbb{I}_{2k}-\left(Q^{(k)} \right)^{-1}\right].
\end{align}

Before starting the proof, we set up some basic notation. We will consider a set of $M$ mixed single-mode states which are characterized by the following moments
\begin{align}
n_i = \braket{ \delta a^\dagger_i \delta a_i}  \geq 0, \quad 
m_i = \braket{\delta a_i^2}=\braket{\delta a_i^{\dagger 2}}^*, 
\end{align}
where we defined $\delta a_i = a_i - \braket{a_i}$. 
Pure squeezed states with squeezing parameters $r_i$ going through a circuit with uniform losses by overall energy transmission $\eta$ can be accommodated in the parametrization introduced above by setting
\begin{align}
n_i = \eta \sinh^2 r_i, \quad m_i = \tfrac12    \eta \sinh 2r_i,
\end{align}
where the lossless case is recovered by setting $\eta = 1$.
 
Since we are interested in local circuits with depth $D$, the matrix $U$, describing the interferometer mixing the $M$ single-mode states, is banded with bandwidth $D$ as proved in Lemma~\ref{lem:U-band}. 
With this notation we write
\begin{align}
Q =&\; \frac{\mathbb{I}_{2M}}{2} + \sigma, \quad   \sigma = V
T V^\dagger, \\ 
V =&\, \begin{pmatrix}
U^* & 0 \\
0 & U
\end{pmatrix}, \\
T =&\, \begin{pmatrix}
\bigoplus_{i=1}^M\ (n_i+\tfrac12)  & \bigoplus_{i=1}^M \ m_i \\
\bigoplus_{i=1}^M \ m_i^* & \bigoplus_{i=1}^M \ (n_i+\tfrac12)
\end{pmatrix}.
\end{align}

We can write the adjacency matrix of the $M$ modes as \cite{jahangiri2020point,rahimi2015can}
\begin{align}
A =&\; X (\mathbb{I}_{2M}-Q^{-1}) = \begin{pmatrix}
B & C \\
C^T & B^*
\end{pmatrix},\\	
B =&\; U \left( \bigoplus_{i=1}^M  \lambda_i   \right) U^T = B^T,\\
C =&\; U \left( \bigoplus_{i=1}^M \mu_i \right) U^\dagger = C^\dagger, \\
\lambda_i =&\; \frac{m_i}{(1+n_i)^2-|m_i|^2},\\
\mu_i =&\; 1 - \frac{1+n_i}{(1+n_i)^2-|m_i|^2}.
\end{align}
For pure states, one finds that $\mu_i = 0$ for all $i$. More generally, it holds that $B$ and $C$ are banded with bandwidth $2D$.

To proceed, we first note that
\begin{align}\label{eq:formW}
Q^{(k)} =& \frac{\mathbb{I}_{2k}}{2} + W_k \sigma  W_k ^\dagger,\\
W_k :=& \begin{pmatrix}
E_k & 0 \\
0 & E_k
\end{pmatrix},\\
E_{k} :=& \begin{pmatrix}
\id{k} & 0
\end{pmatrix}.
\end{align}
In the last equation $E_k$ is a $k \times M$ matrix,
and thus $W_k$ has size $2k \times 2M $.
Note that Eq.~\eqref{eq:formW} has precisely the form of the left hand side of the  Sherman-Morrison-Woodbury identity \cite{hager1989updating}
\begin{align}
\left(A + V C U \right)^{-1} = A^{-1} - A^{-1} V \left(C^{-1} + U A^{-1} V \right)^{-1} U A^{-1},
\end{align}
and thus we write
\begin{align}
\left[ Q^{(k)} \right]^{-1} =&\;
2 \mathbb{I}_{2 k}
- 4 W_k S^{-1}
W_k ^\dagger,
\\
S \,:=&\; \sigma^{-1} + 2 W_k ^\dagger W_k.
\end{align}
We claim that the matrix $S^{-1}$ is block banded with bandwidth at most $4D$.
Proving this claim will conclude the proof,
as it then follows that $[Q^{(k)}]^{-1}$ is also block banded, with the same bandwidth.

Now we examine the terms in the matrix~$S$.
Note that 
\begin{align}
W_k ^\dagger W_k &=
\begin{pmatrix}
E_k^\dagger E_k & 0 \\
0 & E_k^\dagger E_k
\end{pmatrix},
\\
E_k^\dagger E_k &=
\begin{pmatrix}
\id{k} & 0 \\
0 & 0
\end{pmatrix}.
\end{align}
Next we look at 
\begin{align}
\sigma^{-1} &= V T^{-1} V^\dagger,  \\
T^{-1} &=  \left(
\begin{array}{cc}
\bigoplus_{i=1}^M  \frac{n_i+\tfrac{1}{2}}{\left(n_i+\tfrac{1}{2}\right)^2 - |m_i|^2}  &
\bigoplus_{i=1}^M  \frac{-m_i}{\left(n_i+\tfrac{1}{2}\right)^2 - |m_i|^2}  \\
\bigoplus_{i=1}^M   \frac{-m_i^*}{\left(n_i+\tfrac{1}{2}\right)^2 - |m_i|^2}  &
\bigoplus_{i=1}^M \frac{n_i+\tfrac{1}{2}}{\left(n_i+\tfrac{1}{2}\right)^2 - |m_i|^2} \\
\end{array}
\right).
\end{align}
We are now ready to write
\begin{align}\label{eq:big}
S^{-1} =&\;  V \tilde{S}^{-1} V^\dagger,
\\
\tilde{S} \,:=&\;
T^{-1} +  2 V^\dagger  \,(W_k ^\dagger W_k)\, V.
\end{align}
Recall that the matrix $U$ describing the interferometer
is the product of precisely $D$ unitary matrices that are block diagonal,
where the blocks have either size one or two.
Similarly, the matrix $V = U^* \oplus U$ is also a product of block diagonal unitary matrices.
Given the special block structure of $W_k ^\dagger W_k$,
we conclude that
\begin{align}
V^\dagger
\,(W_k ^\dagger W_k)\,
V  =
\begin{pmatrix}
Y & 0 \\
0 & Y
\end{pmatrix}.
\end{align}
where $Y = \mathbb{I}_k \oplus K \oplus 0_{M-2D-k}$
and $K$ is a $2D \times 2D$ positive semi-definite matrix.
Since the blocks of $T^{-1}$ are diagonal,
it follows that the inverse of $\tilde{S}$ has the form
\begin{align}
\tilde{S}^{-1}
= \begin{pmatrix}
G & F \\
F^T & G^* 
\end{pmatrix},
\end{align}
where $G$ and $F$ are diagonal except for a block of size $2D \times 2D$.
After multiplying by $V$ on the left and by $V^\dagger$ on the right,
we conclude that $S^{-1}$ is block banded with bandwidth at most~$4D$, as claimed.
\end{proof}

We can improve our upper bound on the bandwidth to $w\leq 2D$. This can be understood by noting that each layer of local gates actually has `half' depth, so we have depth one for two layers of gates. Nonetheless, the upper bound of $4D$ is sufficient for our purposes.

\noindent We then observe the following:
\begin{lemma}\label{lemma:permutation}
There exists a permutation of rows and columns which transforms a block banded matrix $A^{(k)}$, where each block has bandwidth $w$, into a banded matrix with bandwidth~$2w$.
\end{lemma}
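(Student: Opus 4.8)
The plan is to use the permutation that \emph{interleaves} the two index blocks of $A^{(k)}$: the bijection $\pi$ on $\{1,\dots,2k\}$ that sends the ``top'' index $i$ to $2i-1$ and the ``bottom'' index $k+i$ to $2i$, for each $i=1,\dots,k$. Equivalently, one reorders rows and columns so that, for every mode $i$, the two indices $i$ and $k+i$ attached to that mode become the adjacent pair $2i-1,\,2i$. The reason this is the right choice is that the block structure in Eq.~\eqref{eq:A-lossy-2} constrains only \emph{which modes} a nonzero entry may couple --- it is completely indifferent to which of the four blocks the entry sits in --- so every nonzero matrix element links two modes whose labels differ by at most $w$, and interleaving converts ``mode distance $\le w$'' into ``index distance $\le 2w$''.

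The main computation is then the following bookkeeping. After applying $\pi$, the new position of any index lies in the pair $\{2i-1,2i\}$, where $i$ is the mode it belongs to. Given a nonzero entry of the permuted matrix, between the images of modes $i$ and $j$, Eq.~\eqref{eq:A-lossy-2} tells us it equals one of $B^{(k)}_{ij}$, $C^{(k)}_{ij}$, $C^{(k)}_{ji}$ (the entry of $(C^{(k)})^{T}$), or $(B^{(k)})^{*}_{ij}$; since $B^{(k)}$ and $C^{(k)}$ both have bandwidth $w$, it can be nonzero only when $\abs{i-j}\le w$, so the two new positions differ by at most $2w$ (keeping the top/bottom offsets straight across the four cases is what pins down this constant). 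A clean way to package this is to pass to the block picture: under $\pi$ the matrix $A^{(k)}$ becomes a $k\times k$ array of $2\times 2$ blocks whose $(i,j)$ block is $\begin{pmatrix} B^{(k)}_{ij} & C^{(k)}_{ij}\\ C^{(k)}_{ji} & (B^{(k)})^{*}_{ij}\end{pmatrix}$, and this block vanishes whenever $\abs{i-j}>w$; a block-banded matrix with block-bandwidth $w$ whose blocks have size $2$ is banded with bandwidth $2w$, which is the assertion of the lemma.

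I do not expect a genuine obstacle here: once the interleaving permutation is chosen, the statement reduces to an index relabelling. The points that require a little care are cosmetic --- tracking the top/bottom offsets in the four block positions so the final count comes out to $2w$ rather than a needlessly larger constant, and observing that the off-diagonal blocks are $C^{(k)}$ and its transpose rather than a single symmetric block, which is harmless because both share the same band-$w$ support pattern. The block reformulation also makes transparent how the bound generalizes (block-bandwidth $w$, block size $b$) if one later wants to track the constant more carefully.
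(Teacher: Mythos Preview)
Your proposal is correct and uses exactly the interleaving permutation given in the paper's one-line proof $(1,\ldots,k,k{+}1,\ldots,2k)\to(1,k{+}1,\ldots,k,2k)$, merely supplying the block-picture verification that the paper omits. One cosmetic point: the entry $C^{(k)}_{i,\,i+w}$ lands at position $(2i{-}1,\,2i{+}2w)$, so the scalar bandwidth after interleaving is actually $2w{+}1$ rather than $2w$; this off-by-one is already present in the lemma statement itself and is irrelevant to the paper's subsequent use of the bound.
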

\begin{proof}
The permutation is given by 
\eq{\label{eq:permutation}
(1,\ldots,k,k+1,\ldots,2k) \rightarrow (1,k+1,\ldots,k,2k)~.
}
\end{proof}

Since diagonal elements do not affect the bandwidth, our results apply to $\tilde{A}^{(k)}$ as well. Combining Lemma~\ref{lemma:dragonfly} and Lemma~\ref{lemma:permutation}, it follows that, for GBTS with threshold $c$, at each step of Algorithm~\ref{alg:GBTS} we need to calculate hafnians of a banded matrix with bandwidth at most $8Dc$.

\section{Fast computation of loop hafnians for banded matrices}
\label{sec:haf-band}
The hafnian and loop hafnian are generalizations of the permanent,
which is $\sharp$P-complete to compute in the general setting~\cite{valiant1979complexity}.
The best known algorithm for computing permanents of arbitrary matrices was introduced by Ryser~\cite{ryser1963combinatorial} and has complexity $O(n\, 2^{n})$.
As for hafnians,
Bj{\"o}rklund~\cite{bjorklund2012counting} and Cygan and Pilipczuk~\cite{cygan2015faster} derived algorithms with complexity
$O(\mathrm{poly}(n)\, 2^{n/2})$ over arbitrary rings.
Subsequent work by Bj{\"o}rklund \emph{et al.}~\cite{bjorklund2019faster}
computed loop hafnians of complex matrices in time $O(n^3\, 2^{n/2})$.

The goal of this section is to derive an algorithm for efficiently computing loop hafnians of banded matrices,
in time ${O}(n\, w\, 4^{w})$ where $w$ is the bandwidth.
Previous work by Cifuentes and Parrilo~\cite{cifuentes2016efficient} proved that the computation of permanents of a banded matrix can be done in time ${O}(n\, w^2\, 4^{w})$.
Similarly, Schwartz~\cite{schwartz2009efficiently} gave an $O(8^w \log n)$
algorithm for computing hafnians of matrices which are both banded and Toeplitz. Finally, Temme and Wocjan~\cite{temme2012efficient} provided efficient algorithms for computing permanents of matrices that are block factorizable.

\begin{theorem}
	\label{thm:hafnian-band}
	Let $A$ be a symmetric $n\times n$ matrix with bandwidth~$w$ over an arbitrary ring.
	Then we can compute its loop hafnian using ${O}(n\, w\, 4^{w})$ arithmetic operations.
\end{theorem}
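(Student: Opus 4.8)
The plan is to compute $\lhaf(A)$ by a left-to-right dynamic program over the $n$ vertices, exploiting the fact that the bandwidth hypothesis forces every nonzero term in \eqref{lhaf-def} to use only pairs $\{i,j\}$ with $\abs{i-j}\le w$. Consequently, if we reveal the vertices in the order $1,2,\dots,n$ and decide the fate of each vertex as it is revealed -- it either carries a loop, is paired with an already-revealed ``open'' vertex, or is itself declared ``open'' to be paired later -- then the only information we must retain after processing vertex $k$ is \emph{which} vertices inside the window $\{k-w+1,\dots,k\}$ are still open: any open vertex $i\le k-w$ could only be paired with some $j>k\ge i+w$, contributing a zero factor, so such configurations may be discarded. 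This caps the number of stored states by $2^{w}$.

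Concretely, I would define, for $0\le k\le n$ and each subset $S$ of the current window, a ring element $f_k(S)$ equal to the sum, over all partial configurations of $\{1,\dots,k\}$ -- partitions of these vertices into loops, into pairs with both endpoints $\le k$, and into ``half-edges'' emanating from exactly the vertices of $S$ -- of the product of the corresponding entries of $A$ (an entry $A_{ij}$ for a pair, $A_{ii}$ for a loop; symmetry of $A$ makes this well defined). Initialize $f_0(\emptyset)=1$. The transition from step $k$ to step $k+1$ incorporates vertex $k+1$: if the vertex $k+1-w$ that is about to leave the window belongs to $S$, it \emph{must} be paired with $k+1$ now, contributing the single update with factor $A_{k+1-w,\,k+1}$; otherwise vertex $k+1$ may (i) carry a loop, with factor $A_{k+1,k+1}$, leaving $S$ unchanged, (ii) be paired with some open $i\in S$, with factor $A_{i,k+1}$, removing $i$ from $S$, or (iii) be declared open, adding $k+1$ to $S$. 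All arithmetic is addition and multiplication in the ring, so the algorithm is valid over an arbitrary ring, and the answer is $\lhaf(A)=f_n(\emptyset)$.

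For the complexity there are $n$ steps; at each step at most $2^{w}$ states are stored, and from each state there are at most $w+1$ outgoing transitions, each performed with $O(1)$ ring operations. Hence the loop hafnian is obtained in $O(n\,w\,2^{w})\subseteq O(n\,w\,4^{w})$ arithmetic operations, after the trivial adjustments near the two ends of the sweep (where the window is shorter) and in the degenerate regime $n\le w$ (where one may simply invoke a general-purpose loop hafnian routine on $O(w)$ vertices).

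I expect the routine parts -- bounding the number of states, counting transitions, tracking the arithmetic cost -- to be straightforward, and the real work to lie in the correctness proof: showing by induction on $k$ that $f_k(S)$ has the claimed closed form, i.e., that the three transition rules together with the forced pairing when a vertex exits the window implement a weight-preserving bijection between DP computation paths reaching $f_n(\emptyset)$ and the perfect matchings in $\text{PM}(n)$ that contribute a nonzero product in \eqref{lhaf-def}. The genuinely delicate point is justifying that states with an open vertex lying outside the window can be dropped -- exactly the place where the bandwidth hypothesis is used -- together with careful bookkeeping at the window boundaries so that no contributing matching is lost or double-counted.
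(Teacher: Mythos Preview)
Your proposal is correct and in fact yields a sharper bound, $O(n\,w\,2^{w})$, than the $O(n\,w\,4^{w})$ the paper proves. The two dynamic programs differ in what the state records. The paper stores, for each $t$, the loop hafnians of all principal submatrices $A|_D$ with $\{1,\dots,t-2w-1\}\subset D\subset\{1,\dots,t\}$; the table is indexed by $\bar D=D\cap X_t$ for a window $X_t=\{t-2w,\dots,t\}$ of length $2w{+}1$, hence roughly $4^{w}$ entries, and the transition is the Laplace-type expansion $\mylhaf(D)=\sum_i A_{it}\,\mylhaf(D\setminus\{i,t\})$ (Lemma~\ref{lem:recursion}). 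Because this step deletes \emph{two} vertices at once---both $t$ and its partner $i\ge t-w$---holes created over several consecutive steps can pile up as far back as $t-2w$, which is why the wider window is needed. Your ``profile'' DP instead commits each vertex the moment it appears (loop, close an open half-edge, or stay open), so the only information to carry forward is the set of open vertices, necessarily contained in the last $w$ positions; this halves the exponent. What the paper's formulation buys is that every table entry is itself a bona fide subhafnian of a principal submatrix of $A$, which dovetails with the subhafnian language used later for the repeated-row variant; your entries are weighted sums over partial matchings with dangling half-edges, a slightly less canonical object but one that gives the tighter count. The correctness argument you sketch (induction on $k$ plus the forced pairing when a vertex exits the band) is exactly the right one and presents no hidden difficulty.
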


We first introduce some notation.
Let $G = (V,E)$ be the underlying graph structure of a banded matrix.
The vertex set is $V = [n] = \{1,2,3,\ldots,n \}$,
and the edge set $E$ consists of all pairs $(i,j)$ with $i \leq j \leq i{+}w$.
Given a list of edges $\pi \subset E$, we denote
\begin{align}
A(\pi) := \prod_{ij \in \pi} A_{ij}.
\end{align}
With the above notation, we have that
\begin{align}
\lhaf(A) = \sum_{\pi \in \PM(G)} A(\pi),
\end{align}
where $\PM(G)$ is the set of perfect matchings of~$G$.

Throughout this section we assume that the matrix $A$ is fixed.
Given a subset of indices $D \subset [n]$
we denote $\myPM(D)$ the set of perfect matchings of the restricted graph~$G|_D$.
We consider the \emph{subhafnian}
\begin{align}
\mylhaf(D) &:= \sum_{\pi \in \myPM(D)} A(\pi).
\end{align}
Equivalently $\mylhaf(D)$ is the loop hafnian of the principal submatrix of~$A$ indexed by~$D$.

Our algorithm to compute $\lhaf(A)$ relies on dynamic programming,
and is presented as Algorithm~\ref{alg:haf}.
For each $t \in [n]$ we compute a table $H_{t}$.
The loop hafnian of $A$ is one of the entries in the last table~$H_n$. See Appendix~\ref{apendix-example} for an explicit example where we present how to apply our algorithm to calculate the loop hafnian of a $5\times 5$ matrix.

\begin{algorithm}
	\caption{Banded Loop Hafnian}
	\label{alg:haf}
	\begin{algorithmic}[1]
		%\Require{Matrix $A$, bandwidth~$w$}
		%\Ensure{Hafnian of $A$}
		\Procedure{LHafBand}{$A,w$}
		\State $X_1 \gets \{1\}$
		\State $H_1(\bar D) \gets
		\begin{mycases}
		&1,
		&&\text{if } \bar D \!=\! \emptyset
		\\[-3pt]
		&A_{1,1},
		&&\text{if } \bar D \!=\! \{1\}
		\end{mycases}
		\;\text{\algorithmicforall\ } \bar D \subset X_1
		$
		\For {$t \in [2,n]$}
		\State $a_{t} \gets \max\{t{-}2w, 1\}$,\; $X_{t} \gets \{a_{t},\dots,t\}$
		\State $H_{t}(\bar D) \gets
		\begin{mycases}
		&H_{t-1}(\bar D),
		&&\text{if } t \!\notin\! \bar D
		\\
		&f_t(\bar D)
		&&\text{if } t \!\in\! \bar D
		\end{mycases}
		\;\text{\algorithmicforall\ } \bar D \subset X_{t}
		$
		\Statex \,\qquad\;\;\, where
		$f_t(\bar D) = \sum\nolimits_{i \in X_t} A_{i,t}\, H_{t-1}(\bar D \!\setminus\! \{i,t\})$
		\EndFor
		\State \Return $H_n(X_n)$
		\EndProcedure
	\end{algorithmic}
\end{algorithm}

For each $t \in [n]$, let
\begin{subequations}\label{eq:Xt}
	\begin{align}
	a_{t} &:= \max\{t{-}2w,\, 1\},
	\\
	X_{t} &:= \{a_{t}, a_{t}{+}1, \dots, t\},
	\\
	\Delta_{t} &:= \{1,2,\dots,a_{t}{-}1\}.
	\end{align}
\end{subequations}
Note that $|X_{t}|=t$ for $t \leq 2w$,
and $|X_{t}| = 2w{+}1$ for $t > 2w$.
The table $H_{t}$ is indexed by subsets $\bar{D} \in X_{t}$.
In particular, the table has at most $2^{2w+1}$ entries.
As we explain next, each entry $H_{t}(\bar{D})$ is a subhafnian.
Consider the collection
\begin{align}
\mathcal{S} = \{D \subset [n]: \Delta_{t} \subset D \subset [t] \}.
\end{align}
Note that a set $D \!\in\! \mathcal{S}$ is completely determined by its intersection with~$X_{t}$.
So if we let $\bar{D}:=D \cap X_{t}$, there is a one to one correspondence between $\mathcal{S}$ and the subsets of~$X_{t}$.
The subhafnians that we are interested in are
\begin{align}
H_{t}(\bar{D})
\,:=\,
\mylhaf(D)
\,=\,
\mylhaf(\bar{D} \cup \Delta_{t}),
\quad\text{ for }
\bar{D} \subset X_t.
\end{align}
In particular, the loop hafnian of $A$ is the entry $H_{n}(X_n) = \mylhaf([n])$ of table~$H_n$.

The recursion used in Algorithm~\ref{alg:haf} relies on the next lemma.

\begin{lemma}\label{lem:recursion}
	Let $2\!\leq t\!\leq\!n$ and let $D$ be a subset of $[t]$
	that contains $ \Delta_{t}\!\cup\!\{t\}. $
	Then
	\begin{align}
	\mylhaf(D)
	&= \sum_{i \in X_t}
	A_{i t}\; \mylhaf(D \setminus \{i,t\}).
	\end{align}
\end{lemma}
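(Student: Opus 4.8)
The plan is to expand $\mylhaf(D)$ by conditioning on the partner of vertex~$t$ in a perfect matching. Since $D$ contains $t$, every perfect matching $\pi \in \myPM(D)$ must assign $t$ a unique partner; call it $i$. By definition of the loop hafnian, the ``partner'' may be $t$ itself, corresponding to the self-loop edge $tt$ with weight $A_{tt}$, so $i$ ranges over all vertices $i$ of $G|_D$ with $i \leq t$ such that $(i,t) \in E$, including $i = t$. The key structural observation is that because $A$ has bandwidth $w$ and (crucially) $D \supset \Delta_t$ while $D \subset [t]$, the only vertices of $D$ that can be matched to $t$ are those within distance $w$ of $t$, i.e.\ those in $\{t-w,\dots,t\}$; in particular every admissible partner $i$ lies in $X_t = \{a_t,\dots,t\}$ since $a_t = \max\{t-2w,1\} \leq t-w$ (assuming $t>w$; if $t\le w$ then $a_t=1$ and $X_t=[t]$ contains all of $D$ anyway). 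This is exactly why the sum in the statement runs over $i \in X_t$ rather than over an a priori larger set.

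First I would fix such an $i$ and observe that the set of perfect matchings of $G|_D$ in which $t$ is matched to $i$ is in bijection with $\myPM(D \setminus \{i,t\})$: removing the edge $(i,t)$ leaves a perfect matching of the induced subgraph on $D \setminus \{i,t\}$, and conversely any perfect matching of that subgraph together with the edge $(i,t)$ gives a perfect matching of $G|_D$. (When $i = t$ this reads: matchings using the loop at $t$ correspond to $\myPM(D \setminus \{t\})$.) Under this bijection the weight $A(\pi)$ factors as $A_{it} \cdot A(\pi')$ where $\pi'$ is the corresponding matching of the smaller set. Summing over $\pi'$ gives the contribution $A_{it}\,\mylhaf(D \setminus \{i,t\})$ for the fixed partner~$i$.

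Then I would sum over all admissible partners $i$. Since the events ``$t$ is matched to $i$'' for distinct $i$ partition $\myPM(D)$, we get
\begin{align}
\mylhaf(D) = \sum_{i} A_{it}\,\mylhaf(D\setminus\{i,t\}),
\end{align}
where $i$ ranges over partners admissible in $G|_D$. It remains to argue that extending the sum to all of $X_t$ introduces no spurious terms: if $i \in X_t$ but $i \notin D$ or $(i,t) \notin E$, then either $\mylhaf(D\setminus\{i,t\})$ already accounts correctly for an index not present, or $A_{it} = 0$ by the bandwidth hypothesis, so the term vanishes. (One should note $i \notin D$ cannot happen for $i \in \{t-w,\dots,t-1\}\cap X_t$ precisely because $D \supset \Delta_t$ and $D \subset [t]$ force $D$ to contain every index in that window that could matter; any $i \in X_t$ with $t-i > w$ contributes $A_{it}=0$.) I expect the main obstacle to be the bookkeeping in this last step — making the ``no spurious terms'' claim airtight requires carefully tracking which indices of $X_t$ lie in $D$ versus $\Delta_t$ versus outside, and checking the edge-set condition $(i,t)\in E$ against bandwidth — but it is entirely routine once the window $\{t-w,\dots,t\} \subseteq X_t$ is established.
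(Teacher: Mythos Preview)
Your strategy---partition $\myPM(D)$ by the partner of~$t$, set up the bijection $\pi' \mapsto \pi' \cup \{(i,t)\}$, and read off the sum---is exactly the paper's. The bijection argument is correct and yields $\mylhaf(D) = \sum_i A_{it}\,\mylhaf(D\setminus\{i,t\})$ with $i$ ranging over the elements of $D$ adjacent to~$t$ in~$G|_D$.

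The gap is in your final bookkeeping step, where you try to justify extending the sum to all of~$X_t$. You assert that ``$i \notin D$ cannot happen for $i \in \{t{-}w,\dots,t{-}1\}$ because $D \supset \Delta_t$,'' but this is false: by definition $\Delta_t = \{1,\dots,a_t{-}1\}$ with $a_t = \max\{t{-}2w,1\}$, so $\Delta_t$ is \emph{disjoint} from~$X_t$ and in particular from the window $\{t{-}w,\dots,t{-}1\}$. The hypothesis $D \supset \Delta_t \cup \{t\}$ therefore places no constraint whatsoever on which of the indices $t{-}w,\dots,t{-}1$ belong to~$D$. Concretely, with $w=2$, $t=5$, $D=\{2,5\}$ one has $\Delta_5=\emptyset$, so the hypothesis holds, yet $i=3,4$ lie in the window, are absent from~$D$, and have $A_{i5}$ generically nonzero; the terms $A_{35}\,\mylhaf(\{2\})$ and $A_{45}\,\mylhaf(\{2\})$ are then genuinely spurious under a set-theoretic reading of $D\setminus\{i,t\}$. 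The paper's own proof does not address this point either; the bijection it writes is really only over $i \in D$ with $(i,t)\in E$, and the stated identity should be read with the sum restricted to $D\cap X_t$.
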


Before proving the lemma,
let us see that the lemma implies the correctness of the algorithm.
Denoting $\bar{D} := D\cap X_{t}$, the above equation can be rewritten as
\begin{align}\label{eq:recursion}
H_{t}(\bar{D}) &=
\sum_{i \in X_t}
A_{i t} \; H_{t-1}(\bar{D}\setminus\{i,t\})
\end{align}
Each iteration of Algorithm~\ref{alg:haf} uses the above recursion formula.
It follows that the subhafnians $H_t(\bar D) = \lhaf(\bar D \cup \Delta_t)$ are computed correctly.

\begin{proof}[Proof of Lemma~\ref{lem:recursion}]
	Given a matching $\pi' \in \myPM(D {\setminus} \{i,t\})$,
	we can obtain a matching in $\myPM(D)$ by adding the edge $(i,t)$.
	This gives a function:
	\begin{align*}
	\bigcup_{i \in X_t} \myPM(D {\setminus} \{i,t\})
	&\;\to\; \myPM(D),
	\quad
	\pi' \;\mapsto\; \pi' \cup \{(i,t)\}.
	\end{align*}
	Conversely, in any matching $\pi \in \myPM(D)$
	we have that $t$ is connected to a unique~$i$,
	and removing the edge $(i,t)$ gives a matching in $\myPM(D {\setminus} \{i,t\})$.
	Therefore, the function defined  above is a bijection.
	Hence,
	\begin{align}
	\mylhaf(D)
	\,&=\, \sum_{\pi \in \myPM(D)} A(\rho)
	\\
	&=\, \sum_{i \in X_t} \; \sum_{\pi' \in \myPM(D\setminus\{i,t\})} \!\!\!
	A_{i t} \cdot A(\pi')
	\\
	&=\, \sum_{i \in X_t} A_{ij} \cdot \mylhaf(D\setminus\{i,t\}).
	\end{align}
\end{proof}

\begin{proof}[Proof of Theorem~\ref{thm:hafnian-band}]
	We already showed correctness,
	so it remains to estimate the complexity.
	In each iteration we need to compute
	$H_t(\bar D)$ for each $\bar D \subset X_t$.
	Since $|X_t| \leq 2w{+}1$, we need to consider at most $2\cdot 4^w$ subsets.
	The recursion formula~\eqref{eq:recursion} requires $O(|X_t|) \!=\! O(w)$ ring operations.
	Hence, each iteration of the algorithm takes $O(w\, 4^w)$,
	and the overall cost is $O(n\, w\, 4^w)$.
\end{proof}

Recall that the probability of detecting a photon pattern with collisions is given by the loop hafnian of an extended adjacency matrix.
For a GBTS with threshold $c$, we have at most $c$ repetitions for each column and row.
By simply applying Theorem~\ref{thm:hafnian-band}, calculating each probability in our Algorithm~\ref{alg:GBTS} is upper bounded by $O^*(4^{w c})$.

However, such scaling, with an exponential dependence on the number of repetitions, is an overestimate.
On one hand, repetitions increase the matrix size, indeed increasing the cost of calculating its hafnian.
On the other hand, the repetition structure does not carry much new information so that we also expect to see certain cost reduction.
Below we devise a faster algorithm which requires $O^*(n(2c{+}2)^{2w+1})$ steps to calculate the loop hafnian of a banded matrix with at most $c$ repetitions. Since this result is not necessary to prove our main result, we omit the detailed proof here. Interested readers can find the full proof in Appendix~\ref{sec:hafnian-rep}.

\begin{theorem}
	\label{thm:hafnian-rep}
	Let $A$ be a complex-symmetric $n\times n$ matrix  with bandwidth~$w$.
	Let $s = (s_1,\dots,s_n)$ be a vector of positive integers,
	and let $\Arep{s}$ be the symmetric matrix
	obtained from $A$ by repeating the $i$-th row and column $s_i$ times.
	Then we can compute $\lhaf(\Arep{s})$ in time
	${O}^*(n\, (2 c{+}2)^{2w+1})$,
	where $c := \! \max\{s_1,\dots,s_n\}$.
\end{theorem}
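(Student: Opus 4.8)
\textbf{Proof proposal for Theorem~\ref{thm:hafnian-rep}.}

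The plan is to adapt the dynamic-programming scheme of Algorithm~\ref{alg:haf} so that, instead of treating each of the $s_i$ copies of row/column $i$ as a separate vertex (which would give a banded matrix of bandwidth $wc$ and hence the naive $O^*(4^{wc})$ bound), we keep the $n$ original indices and track, for each index in the current ``active window,'' \emph{how many} of its copies have already been matched. Concretely, recall that $\Arep{s}$ has a permutation bringing it to banded form with bandwidth $w' = wc$ in the worst case, but the copies of a given original index are mutually non-adjacent in the matching-relevant sense only through shared neighbours; the key structural fact is that a perfect matching of $\Arep{s}$ is equivalent to a collection of edge-multiplicities $x_{ij}\in\mathbb{Z}_{\geq 0}$ on the edges $ij$ of the banded graph $G$ (including loops) such that $\sum_{j} x_{ij} + 2x_{ii}\,[\text{loop count}] = s_i$ for every $i$, weighted by the appropriate multinomial coefficient $\prod_i s_i! / (\prod_{j} x_{ij}!\,\cdots)$ times $\prod_{ij} A_{ij}^{x_{ij}}$. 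This is the standard ``hafnian of a repeated matrix = weighted sum over edge-multiplicity vectors'' identity, and it is what replaces Lemma~\ref{lem:recursion}.

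First I would set up this multiplicity reformulation carefully, fixing the combinatorial weight so that $\lhaf(\Arep{s})$ is written as a sum over functions $x: E \to \mathbb{Z}_{\geq 0}$ satisfying the degree constraints $\sum_{j\sim i} x_{ij} = s_i$ (with loops contributing twice). Then I would define, in analogy with $X_t$ and $H_t$, a table indexed by the ``partial degree profile'' of the window $X_t = \{a_t,\dots,t\}$ with $a_t = \max\{t-2w,1\}$: for each vector $\bar d = (d_{a_t},\dots,d_t)$ with $0 \le d_i \le s_i \le c$, let $H_t(\bar d)$ be the weighted sum over all multiplicity assignments on edges entirely within $[t]$ that leave exactly $d_i$ ``unmatched slots'' at each $i \in X_t$ and exactly zero unmatched slots at each $i \in \Delta_t = \{1,\dots,a_t-1\}$. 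The recursion processes vertex $t$: we must choose how its $s_t$ slots are distributed among edges $(i,t)$ for $i \in X_t$ (a bounded-degree composition, at most $(c{+}1)^{|X_t|-1}$ choices, or one can do it one neighbour at a time to keep the per-step cost polynomial), update the residual degrees, and sum with the correct multinomial/loop weights. Because $|X_t| \le 2w+1$ and each coordinate ranges over $\{0,\dots,c\}$ (with a factor $2$ slack for loop bookkeeping), the table has at most $(2c{+}2)^{2w+1}$ entries, giving the claimed $O^*(n\,(2c{+}2)^{2w+1})$ runtime once the per-entry update is shown to be $\mathrm{poly}(w,c)$.

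The correctness argument mirrors the proof of Lemma~\ref{lem:recursion}: I would exhibit a bijection (now weight-preserving, accounting for multinomial coefficients) between multiplicity assignments on $[t]$ with prescribed residual profile on $X_t$ and pairs consisting of (i) a choice of how $t$'s slots split across its back-neighbours in $X_t \setminus \{t\}$ plus its loop, and (ii) a multiplicity assignment on $[t-1]$ with the correspondingly shifted residual profile on $X_{t-1}$. The degree constraint at $a_{t-1},\dots,a_t-1$ being forced to zero is what guarantees that these indices can be ``frozen out'' of the window exactly as in Algorithm~\ref{alg:haf}; this is where bandedness is used, since no edge from $t$ reaches below $a_t \ge t-2w$.

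The main obstacle I anticipate is \emph{bookkeeping the weights correctly}, not the algorithmic skeleton. The loop hafnian of a repeated matrix does not simply equal $\sum_x \prod A_{ij}^{x_{ij}}$; it carries combinatorial factors ($s_i!$ in the numerator cancelling against $x_{ij}!$ and powers of $2$ for loops), and one must verify that these factors \emph{factor through the dynamic program} — i.e. that the weight of a global assignment is a product of locally-computable contributions as each vertex is processed, so that the recursion is exact rather than merely an upper bound. Handling self-loops (the ``$\mathrm{lhaf}$'' rather than ``$\mathrm{haf}$'' part) introduces the extra factor of $2$ in the state space and a slightly awkward parity/halving in the loop term; getting that exponent base to be $2c{+}2$ rather than something larger requires a careful encoding of the per-vertex residual state. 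A secondary, minor obstacle is ensuring the per-step update itself costs only $\mathrm{poly}(w,c)$ — naively enumerating all compositions of $s_t$ over $|X_t|$ neighbours costs $(c{+}1)^{2w}$, so one should instead peel off one neighbour at a time, introducing an intermediate table, which is routine but needs to be spelled out to justify the ``$O^*$'' (polynomial factors suppressed) in the statement. These details are precisely why the full argument is deferred to Appendix~\ref{sec:hafnian-rep}.
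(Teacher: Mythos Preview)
Your proposal follows the same architecture as the paper's Appendix~\ref{sec:hafnian-rep}: rewrite $\lhaf(\Arep{s})$ as a weighted sum over edge-multiplicity vectors $\tau$ with $\deg(\tau)=s$ (their Lemma~\ref{thm:hafrepeated}), then run a dynamic program whose table $H_t$ is indexed by degree profiles $\bar d\in\{0,\dots,c\}^{X_t}$ on the window $X_t$ of size $2w{+}1$, with the step-$t$ recursion (their Lemma~\ref{thm:recursion2}) distributing $d_t$ slots among the back-neighbours of~$t$ and its loop. The one implementation difference is how that recursion is evaluated: the paper observes that it is a $(2w{+}1)$-dimensional convolution and computes it by FFT, and the zero-padding needed to make the FFT non-circular is exactly what produces the base $2c{+}2$ in the stated bound. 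Your ``peel one neighbour at a time'' alternative instead exploits the separable form of the kernel $g(d')$ (a product over $i\in Y_t$ of factors depending only on $d'_i$, times a loop factor depending on the residual $\sigma_t(d')$), and carried out carefully it in fact yields $O^*\!\bigl(n\,(c{+}1)^{2w+1}\bigr)$, slightly sharper than the theorem. Two small corrections: the table genuinely has at most $(c{+}1)^{2w+1}$ entries, not $(2c{+}2)^{2w+1}$; and in the loop-hafnian setting a self-loop contributes \emph{once} to the degree, so your ``with loops contributing twice'' and the attendant ``factor-$2$ slack'' are not needed. Neither slip affects the validity of your outline for proving the bound as stated.
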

\begin{proof}
	See Appendix~\ref{sec:hafnian-rep}.
\end{proof}

\section{Gaussian threshold boson sampling with shallow circuits of local gates}
\label{sec:main}
We are now ready to put together everything we have prepared so far and to prove our main result.
\begin{theorem}
\label{thm:main}Consider an $M$-mode, uniformly lossy GBTS problem with threshold $c>0$. If
the unitary transformation consists of, $D$ layers of commuting local gates, the sampling task can be simulated in running time $T = O^*(M^2 \, \mathrm{poly}(c)\, (2c+2)^{16D})$. Consequently, when the linear-optical circuit is shallow, i.e., when $D=O(\log(M))$, the sampling can be simulated efficiently on a classical computer.
\end{theorem}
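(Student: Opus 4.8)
The plan is to assemble the machinery developed in the previous sections into a single running-time bound for Algorithm~\ref{alg:GBTS}. First I would recall that the algorithm outputs one sample by computing at most $Mc$ conditional probabilities (more precisely $M(c+1)$ probabilities, one for each mode $k\in[M]$ and each photon count $x\in\{0,1,\dots,c\}$, using the free denominator from the previous step). Each such probability is, up to an efficiently computable Gaussian prefactor, a loop hafnian $\lhaf\bigl[(\tilde A^{(k)})_{s_1,\dots,s_k}\bigr]$ of an extended adjacency matrix of the reduced Gaussian state on the first $k$ modes, with the relevant rows and columns repeated at most $c$ times. So the whole cost reduces to: (number of probabilities) $\times$ (cost of one loop hafnian).

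Second, I would bound the cost of a single loop hafnian. By Lemma~\ref{lemma:dragonfly}, the matrix $A^{(k)}$ is block banded with block bandwidth $w\le 4D$; by Lemma~\ref{lemma:permutation} a suitable permutation of rows and columns turns it into a banded matrix of bandwidth $2w\le 8D$, and since the diagonal entries $\gamma_s$ do not affect the bandwidth, the same holds for $\tilde A^{(k)}$. Repeating rows and columns at most $c$ times — which is exactly the structure $\Arep{s}$ of Theorem~\ref{thm:hafnian-rep} with $\max_i s_i\le c$ — the banded structure is preserved (repetition multiplies the bandwidth by at most a constant related to $c$, but the relevant statement is that Theorem~\ref{thm:hafnian-rep} applies directly with bandwidth $2w$). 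Hence by Theorem~\ref{thm:hafnian-rep} each loop hafnian costs $O^*\bigl(n\,(2c+2)^{2(2w)+1}\bigr)$ where $n=O(Mc)$ is the size of the (repeated) matrix, giving $O^*\bigl(Mc\,(2c+2)^{16D}\bigr)$ per probability. Alternatively one could invoke Theorem~\ref{thm:hafnian-band} with bandwidth $O(Dc)$, but that yields the weaker $O^*(4^{wc})$ bound; using Theorem~\ref{thm:hafnian-rep} is what produces the stated exponent $16D$.

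Third, I would multiply: $M(c+1)$ probabilities, each costing $O^*\bigl(Mc\,(2c+2)^{16D}\bigr)$, plus the overhead of forming each reduced covariance matrix $Q^{(k)}$ and computing the Gaussian prefactor and the permutation (all polynomial in $M$ and $c$), gives a total
\begin{align}
T = O^*\!\bigl(M^2\,\mathrm{poly}(c)\,(2c+2)^{16D}\bigr).
\end{align}
Finally, setting $D=O(\log M)$ makes $(2c+2)^{16D} = M^{O(\log(2c+2))}$, which is polynomial in $M$ for fixed $c$, so the sampler runs in $\mathrm{poly}(M)$ time; the correctness of the sampled distribution is the content of Appendix~\ref{sec:append-correctness-algorithm} (it reproduces $\tilde p$ of Eq.~\eqref{eq:GBTS-dist}), so nothing further is needed there.

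The main obstacle — or rather the only genuinely delicate point — is tracking the bandwidth correctly through the chain of reductions: one must check that taking the reduced state on the first $k$ modes (Lemma~\ref{lemma:dragonfly}), then the permutation to a genuinely banded matrix (Lemma~\ref{lemma:permutation}), then the row/column repetitions forced by photon collisions, compose to leave a banded matrix whose bandwidth is still $O(D)$ with the precise constant feeding into the exponent, and that Theorem~\ref{thm:hafnian-rep} (not merely Theorem~\ref{thm:hafnian-band}) is the right tool so that the exponent is linear in $D$ rather than in $Dc$. Everything else is bookkeeping: counting the $O(Mc)$ probabilities, and absorbing the polynomial-in-$(M,c)$ classical linear-algebra overhead into the $O^*$ and $\mathrm{poly}(c)$ factors.
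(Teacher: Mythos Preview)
Your proposal is correct and follows essentially the same route as the paper: count the $O(Mc)$ conditional probabilities that Algorithm~\ref{alg:GBTS} must evaluate, use Lemma~\ref{lemma:dragonfly} and Lemma~\ref{lemma:permutation} to get bandwidth $\le 8D$ for the (permuted) reduced adjacency matrix, then apply Theorem~\ref{thm:hafnian-rep} with that bandwidth to obtain the $(2c{+}2)^{16D}$ factor, and multiply. One small inaccuracy: in Theorem~\ref{thm:hafnian-rep} the parameter $n$ is the size of the matrix \emph{before} repetition (so $n=2k\le 2M$, not $O(Mc)$), but since your overestimate only enlarges a polynomial factor it does not affect the stated bound.
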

\begin{proof}
Recall that at the $k$-th step of Algorithm~\ref{alg:GBTS}, we need to calculate $c$ hafnians (the one correspond to zero photons is trivial to calculate): $\lhaf(\tilde{A}^{(k)}_{s_1,\ldots,s_k})$ for $1\leq s_k\leq c$~. The dependence of the adjacency matrices at step $k$ on the samples from the previous step seems to be complicated to analyze. However, an upper bound on the run time of our classical algorithm, by considering the most costly outcome $s=(c,\ldots,c)$, is straightforward.

From Lemma~\ref{lemma:dragonfly}, we know that the adjacency matrix of the reduced $k$-mode Gaussian state  $A^{(k)}$ is a block banded matrix where the blocks have width at most $4D$. 

Since the loop hafnian is invariant under permutations, invoking Lemma~\ref{lemma:permutation} and Theorem~\ref{thm:hafnian-rep}, we know the cost of finishing step $k$ when Algorithm~\ref{alg:GBTS} is executed is upper bounded by
\begin{align}
    T_k = \sum_{i=0}^c \sbrac{2(k{-}1)c+2i}O^*((2c{+}2)^{16D+1})~.
\end{align}

Summing over $k=1,\ldots,M$ we have the following total run time
\begin{align}
    T = O^*(M^2\, \mathrm{poly}(c)\, (2c{+}2)^{16D})~.
\end{align}
For a shallow circuit, we have $D=O(\log(M))$, which gives $T=\mathrm{poly}(M)$ as $c$ is constant. This concludes our proof.
\end{proof}
%Note that our result holds both for pure and (uniformly) lossy GBS. 

\section{Conclusion}
\label{sec:conclusion}

We have proved that sampling from Gaussian states prepared by shallow quantum-optical circuits with local interactions can be simulated efficiently. We have introduced Gaussian Boson Threshold Sampling (GBTS), by which we obtain not only a well-defined model ready for mathematical analysis, but also a  faithful representation of real experiments where photon detectors can be overloaded. We have investigated the structure of the adjacency matrices describing Gaussian states prepared by shallow circuits: they are, up to permutations, banded matrices with bandwidth that is proportional to circuit depth. We have also introduced a dynamic-programming algorithm to exploit the structure of banded matrices with repeated rows and columns, a subroutine to be called in our sampling algorithm. Putting together these steps, we obtain an efficient sampling algorithm for shallow circuits with local gates if the depth grows as a logarithm of the number of modes.

Similar results should be obtainable for regular Boson Sampling by using the faster algorithm for calculating permanents of banded matrices~\cite{cifuentes2016efficient}. Actually, we expect the proof for BS can be much simpler than ours since 1) BS is defined with finite photon number and 2) the relevant matrices are simply submatrices of the unitary matrix.
However, a successful proof also requires an algorithm that calculates a polynomial number of probabilities to output one sample. A good candidate is provided by Clifford and Clifford~\cite{clifford2018classical}, though their algorithm requires careful examination to make it compatible with the results from  Ref.~\cite{cifuentes2016efficient}. Such examination seems to be neglected in previous papers~\cite{muraleedharan2019quantum,lundow2019efficient} in which the authors attempted to leverage the idea of banded matrices.

The complexity of our algorithm agrees with the intuition that weakly-entangled multipartite states are not hard for a classical computer to simulate. Our results rule out optical systems with a shallow-depth circuit and local gates as a candidate for the demonstration of quantum supremacy.
Moreover, if the photon loss compounds exponentially with the circuit depth, which is the case for most of today's photonic platforms, it was shown that the sampling becomes asymptotically simulable~\cite{garcia2019simulating,qi2020regimes}. This poses a great challenge to the demonstration of quantum supremacy using optical circuits with local interactions. Therefore, our work calls for further study of the computational complexity of sampling photonic states generated by shallow-circuits with \emph{non-local} gates~\cite{lubasch2018tensor}.

\section*{Acknowledgements}
H.Q. thanks Ra\'ul Garc\'ia-Patr\'on S\'anchez for  suggesting the initial idea of this project and Daniel Brod and Alexander M.~Dalzell for helpful discussion.
\appendix
\section{Correctness of Algorithm~\ref{alg:GBTS}}
\label{sec:append-correctness-algorithm}
Here we show that Algorithm~\ref{alg:GBTS} indeed samples from the distribution defined in Eq.~\ref{eq:GBTS-dist}. When the algorithm does not halt, the sampled pattern does not overload any PNR, that is $\bm{n}\notin\Sigma_c$ and its probability is given by
\begin{align}
    \tilde{p}(\bm{n}) = \prod_{k=1}^{M}p(n_k\vert n_1\ldots n_{k-1}) = p(\bm{n})~.
\end{align}
When the algorithm does halt, the probability that it halts at step $k$ is given by
\begin{align}
    p(\text{halt at kth step}) = \sum_{n_1,\ldots, n_{k-1}<c, n_k\geq c}\prod_{l=1}^kp(n_k\vert n_1\ldots n_{k-1})~.
\end{align}
Then the total probability of the algorithm halted, output $\#$, is given by
\begin{align}
    p(\#)=&\sum_{k=1}^M p(\text{halt at }k\text{th step})\nonumber\\
     =& \pbrac{\sum_{n_1\geq c, n_2,\ldots,n_M}+ \sum_{n_1\leq c, n_2\geq c,n_3\ldots,n_M} + \ldots} p(\bm{n})\nonumber\\
    =&\sum_{\bm{n}\in\Sigma_c}p(\bm{n})~.
\end{align}
Therefore, indeed our Algorithm~\ref{alg:GBTS} simulates GBTS.

\section{Banded hafnian example calculation}
\label{apendix-example}

Here we illustrate  our loop hafnian algorithm for banded matrices by considering an adjacency matrix with five vertices and having bandwidth equal to one ($w = 1$), namely
\begin{align}
\lhaf\left( \left[\begin{array}{ccccc}
0 & a & 0 & 0 & 0 \\
a & 0 & b & 0 & 0 \\
0 & b & c & d & 0 \\
0 & 0 & d & 0 & e \\
0 & 0 & 0 & e & f \\
\end{array}\right] \right)
\;=\; a c e + a d f.
\end{align}
We denote by $\mylhaf(D)$ the loop hafnian of the submatrix of $A$ given by indices in~$D$.
Trivially, $\lhaf(A) = \mylhaf(\{1,2,\ldots,n\}) $.
The algorithm calculates many such subhafnians,
storing their values in the dynamic programming table.
By reusing previous subhafnians
as well as omitting several subsets $D$ based on the bandwidth,
the algorithm is able to calculate the hafnian of the overall matrix in ${O}(n\, w\, 4^{w})$ arithmetic operations.

The algorithm proceeds in $n$~steps.
In the $t$-th step it calculates the subhafinans given by the subsets $D$
such that $\{1,2,\ldots,t{-}2w{-}1\} \subseteq D \subseteq \{1,2,\ldots,t\}$.
These subhafnians are computed using the formula:
\begin{align*}
\mylhaf(D) = \sum_{i=t-w}^{t} A_{i t}\, \mylhaf(D\setminus \{i,t\}).
\end{align*}

\begin{description}
	\item[Step 1]
	$D \subseteq \{1\}$
	
	Compute $\mylhaf(\emptyset) \!=\! 1$ and $\mylhaf(\{1\}) \!=\! 0$.
	
	\item[Step 2]
	$D \subseteq \{1, 2\}$
	
	Besides the subhafnians from the previous step,
	compute $\mylhaf(\{2\}) \!=\! 0$ and $\mylhaf(\{1,2\}) \!=\! a$.
	
	\item[Step 3]
	$D \subseteq \{1, 2, 3\}$
	
	Four new subhafnians:
	$\mylhaf(\{3\}) \!=\! c$, $\mylhaf(\{1,3\}) \!=\! 0$, $\mylhaf(\{2,3\}) \!=\! b$, and $\mylhaf(\{1,2,3\}) \!=\! a c$.
	
	\item[Step 4]
	$\{1\} \subseteq D \subseteq \{1, 2, 3, 4\}$
	
	Four new hafnians
	($D \!=\! \{1,4\}$, $\{1,2,4\}$, $\{1,3,4\}$, $\{1,2,3,4\}$).
	In particular, $\mylhaf(\{1,2,3,4\})$ is obtained with the formula:
	\begin{align*}
	A_{3,4} \cdot \mylhaf&(\{1,2\}) + A_{4,4} \cdot \mylhaf(\{1,2,3\})
	\\
	&=\, d \cdot a + 0 \cdot a c
	\,=\, a d,
	\end{align*}
	where the needed subhafnians were already computed.
	
	\item[Step 5]
	$\{1,2\} \subseteq D \subseteq \{1, 2, 3, 4,5\}$
	
	Four new hafnians.
	In particular,
	$\mylhaf(\{1,2,3,4,5\})$ is obtained with the formula:
	\begin{align*}
	A_{4,5} \cdot \mylhaf&(\{1,2,3\}) + A_{5,5} \cdot \mylhaf(\{1,2,3,4\})
	\\
	&=\, e \cdot a c + f \cdot a d
	\,=\, a c e + a d f,
	\end{align*}
	where the needed subhafnians were known.
	The above value is the loop hafnian of the original matrix.
	
\end{description}

\section{Loop hafnian algorithm for matrices with repetitions}% ~\ref{thm:hafnian-rep}}
\label{sec:hafnian-rep}
In this section we derive an efficient algorithm for computing loop hafnians of banded matrices with repeated entries.
Let $A$ be a symmetric $n\times n$ matrix with bandwidth~$w$
and let $s = (s_1,\dots,s_n) \in \NN^n$ be a vector of positive integers.
We will compute the loop hafnian of the matrix $\Arep{s}$,
obtained by repeating the $i$-th row and column $s_i$ times.

The special case $n=1$ is quite important for the analysis.
In this case the matrix $A$ and the vector $s$ are scalars,
so $\Arep{s}$ is a constant matrix.
Let $T_k(a)$ denote the loop hafnian of an $k\times k$ constant matrix
with all entries equal to~$a$.
The sequence $\{T_k(a)\}_{k \in \NN}$ satisfies the following recursion:
\begin{gather*}
T_0(a) = 1,
\qquad
T_1(a) = a,
\\
T_{k}(a) = a\,\bigl(T_{k-1}(a) + (k{-}1)T_{k-2}(a)\bigr).
\end{gather*}
In particular, $T_k(1)$ is the $k$-th telephone number \cite{bjorklund2019faster}.

Consider now an arbitrary~$n$.
Let $G=(V,E)$ be the underlying graph structure of~$A$,
with vertex set $V = [n]$.
We will define some generalized perfect matchings of~$G$ that allow repeated edges.
We represent a list of repeated edges as a vector $\tau \!\in\! \NN^E$,
i.e., $\tau$ is a vector indexed by~$E$,
and for each $ij \!\in\! E$ the entry $\tau_{ij} \!\in\! \NN$ indicates the number of times that edge $ij$ appears.
The \emph{degree vector} of $\tau$
is the vector $\deg(\tau) \in \NN^n$
with coordinates $\deg(\tau)_{i} := \sum_{j \in [n]} \tau_{ij}$.
For a weight vector $s \in \NN^n$,
we define an \emph{$s$-matching} of $G$ as a vector $\tau \in \NN^E$
such that $\deg(\tau) = s$.
Let $\myPM(s)$ be the set of all $s$-matchings of~$G$.

The following lemma expresses $\lhaf(\Arep{s})$ as sum of some simple quantities associated to each matching $\tau \in \myPM(s)$.
The contribution of $\tau$ to $\lhaf(\Arep{s})$ is a multiple of
\begin{align}
A(\tau) \,:=\,
\left(\prod_{ii \in E_\ell} T_{\tau_{ii}}(A_{ii}) \right)\,
\left(\prod_{ij \in E_0} (A_{ij})^{\tau_{ij}}\right),
\end{align}
where $E_\ell \subset E$ consists of the loops in the graph,
and $E_0 := E \setminus E_\ell$ consists of the remaining edges.

\begin{lemma}\label{thm:hafrepeated}
	Let $A$ be a symmetric $n\times n$ matrix,
	and let $s \in \NN^n$.
	Then
	\begin{align}
	\lhaf(\Arep{s}) \,=\,
	s! \sum_{\tau \in \myPM(s)} \frac{1}{\tau!} \cdot A(\tau),
	\end{align}
	where $ s! := \prod_{i \in [n]} s_{i}!$
	and $ \tau! \,:=\, \prod_{ij \in E} \tau_{ij}!\,. $
\end{lemma}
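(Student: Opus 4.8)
The plan is to set up an explicit bijection between perfect matchings of the blown-up graph $G_s$ underlying $\Arep{s}$ and pairs consisting of an $s$-matching $\tau \in \myPM(s)$ together with a combinatorial ``refinement'' that records how the repeated copies are paired. Concretely, label the vertices of $G_s$ by pairs $(i,a)$ with $i \in [n]$ and $1 \le a \le s_i$; a perfect matching $\pi$ of $G_s$ assigns to each ``copy'' vertex either a partner copy-vertex (possibly on the same original vertex $i$, which corresponds to using a loop $ii$) or a loop on itself. Collapsing copies back to their originals sends $\pi$ to an $s$-matching $\tau$, where $\tau_{ij}$ counts how many matched edges of $\pi$ project to $ij$ (with the convention that a self-loop at $(i,a)$ and an edge between two copies of $i$ both project to the loop $ii$). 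First I would make this collapsing map precise and record that every $\pi$ has a well-defined $\deg(\tau)=s$.

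Next I would count the fibers. Fix $\tau \in \myPM(s)$. To reconstruct a $\pi$ projecting to $\tau$ one must, for each original vertex $i$, distribute its $s_i$ copies among the ``slots'' dictated by $\tau$: for each neighbor $j \neq i$ there are $\tau_{ij}$ edges to copies of $j$, and the loop $ii$ (if present) absorbs $\tau_{ii}$ copies internally. The number of ways to partition the copy-set of size $s_i$ into labeled groups of the prescribed sizes is the multinomial coefficient $s_i! / \prod_j \tau_{ij}!$ — but the edges going to a common neighbor $j$ are interchangeable only up to how they match against $j$'s copies, so one must be careful to count the matching-up of the two sides of each non-loop edge class $ij$ exactly once. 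The clean bookkeeping is: the product over all $i$ of $s_i!$ counts ordered assignments of copies to edge-endpoints; dividing by $\prod_{ij \in E}\tau_{ij}!$ removes the overcounting from permuting the $\tau_{ij}$ parallel copies of each edge; and the internal structure on the $\tau_{ii}$ copies sitting on a loop contributes exactly $T_{\tau_{ii}}(A_{ii})$ worth of matchings (by the $n=1$ analysis: a constant $k\times k$ block has loop hafnian $T_k$). Multiplying the edge weights, each non-loop class $ij$ contributes $(A_{ij})^{\tau_{ij}}$. Assembling these factors over the fiber yields exactly $s!\,\tau!^{-1}A(\tau)$ summed over $\tau$, and summing the weights $A(\pi)$ over all $\pi$ then gives $\lhaf(\Arep{s})$ on the left.

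Two bookkeeping points deserve care, and the loop bookkeeping is where I expect the main obstacle. The first is making sure that ``edges between two distinct copies of the same vertex $i$'' and ``self-loops on a single copy'' are both funneled into the single loop-count $\tau_{ii}$ in a way that is consistent with the definition of $\lhaf$ via $\PM(k)$ on complete-graphs-with-loops; this is precisely the content that the loop hafnian of a constant block equals $T_{\tau_{ii}}(A_{ii})$, so I would invoke the stated recursion for $T_k(a)$ rather than re-derive it. The second is verifying that the multinomial $\times$ telephone-number factors on the fiber really multiply out to the symmetric expression $s!/\tau!$ times the product $\prod_{ij\in E_\ell}T_{\tau_{ij}}\cdot\prod_{ij\in E_0}A_{ij}^{\tau_{ij}}$, with no stray factors — the cleanest way is to fix an $i$ and count in two stages (first allocate copies to the loop versus to the various neighbor-classes, a multinomial in $s_i$; then within each neighbor-class $ij$ the $\tau_{ij}$ copies on $i$'s side get matched to the $\tau_{ij}$ copies on $j$'s side in $\tau_{ij}!$ ways, but this $\tau_{ij}!$ is shared between $i$ and $j$ and so should be counted once, which is exactly what dividing the full product $\prod_i s_i!$ by $\prod_{ij\in E}\tau_{ij}!$ accomplishes). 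Once the fiber count is pinned down, the lemma follows by summing over fibers; the banded hypothesis plays no role here and would only enter later when one wants the count of $s$-matchings to be controllable.
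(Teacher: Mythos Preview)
Your proposal is correct and follows essentially the same route as the paper: define the collapsing map $f$ from perfect matchings of $G_s$ to $s$-matchings $\tau$, then compute $\sum_{\pi \in f^{-1}(\tau)} \Arep{s}(\pi) = \tfrac{s!}{\tau!}\,A(\tau)$ and sum over fibers. In fact, your treatment of the loop case is more explicit than the paper's --- the paper handles the loopless case first (where $\Arep{s}(\pi)=A(f(\pi))$ and the fiber count is exactly $s!/\tau!$) and then simply asserts that in the presence of loops the fiber-sum identity still holds, whereas you spell out why the $T_{\tau_{ii}}(A_{ii})$ factor emerges from summing over internal matchings of the $\tau_{ii}$ copies and why the multinomial-times-bijection count collapses to $s!/\tau!$.
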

\begin{proof}
	Assume first that the graph has no loops ($E_\ell \!=\! \emptyset$).
	Let $G_s$ be the graph associated to~$\Arep{s}$.
	We may view its vertices as pairs $(i,\ell_{i})$ where $i \in [n]$ and $\ell_{i} \in [s_{i}]$,
	and its edges do not depend on~$\ell_{i}$.
	Given a perfect matching $\pi \in \PM(G_s)$
	there is a natural way to obtain an $s$-matching $\tau \in \myPM(s)$.
	Namely, for each edge $(i,\ell_{i}), (j,\ell_{j})$ in $\pi$ we ignore the second coordinate and obtain the edge $(i,j)$.
	This gives a function $f : \myPM(s)$.
	It is clear that $\Arep{s}(\pi) = A(f(\pi))$.
	Given $\tau \in \myPM(s)$,
	a simple combinatorial argument shows that the fiber $f^{-1}(\tau)$ consists of exactly $s!/\tau!$ elements.
	Then
	\begin{align*}
	\lhaf(\Arep{s})
	= \sum_{\pi \in \PM(G_{s})} \Arep{s}(\pi)
	= \sum_{\tau \in \myPM(s)} \frac{s!}{\tau!}\cdot A(\tau).
	\end{align*}
	Consider now the general case, where loops are allowed.
	In such a case the equation $\Arep{s}(\pi) = A(f(\pi))$ is no longer valid,
	but we still have that
	\begin{align*}
	\sum_{\pi \in f^{-1}(\tau)} \Arep{s}(\pi)
	= \frac{s!}{\tau!}\cdot A(\tau).
	\end{align*}
	Hence, the argument from above still applies.
\end{proof}

In order to compute $\lhaf(\Arep{s})$,
we will use a variant of the dynamic program in Algorithm~\ref{alg:haf}.
One of the main differences is that the dynamic programming table $H_t$
is now indexed by vectors $\bar{d} \in \NN^n$ instead of subsets $\bar{D} \subset [n]$.

We need additional notation to present the algorithm.
From now on we assume that the matrix $A$
and the weight vector~$s$ (entrywise positive) are fixed.
Given another weight vector $d \in \NN^n$, the associated \emph{scaled subhafnian} is
\begin{align}
\mylhaf(d) \,:=\, \frac{1}{d!} \cdot \lhaf(\Arep{d})
\,=\, \sum_{\tau \in \myPM(d)} \frac{1}{\tau!} \cdot A(\tau).
\end{align}
The support of $d$ is $\supp(d) := \{i \!\in\! [n]: d_{i} \!\neq\! 0\}$.
Given $X \!\subset\! [n]$, the \emph{restriction} $d|_{X} \!\in\! \NN^n$ is obtained by setting the entries outside of $X$ to zero, i.e.,
\begin{align}
(d|_{X})_{i} =
\begin{cases}
d_{i}, &\text{ if }i \in X\\
0, &\text{ if }i \in [n]\setminus X
\end{cases}
\end{align}
We only consider weight vectors $d$ such that $d \leq s$ entrywise.
The \emph{saturated indices} of $d$ are
$\sat(d) := \{i\in [n] : d_{i} \!=\! s_{i}\}$.
Note that $\sat(d) \subset \supp(d)$.
Finally, we denote $\NN_{\leq d}^{X}$ the set of weight vectors supported by $X$ and upper bounded by $d$:
\begin{align}
\NN_{\leq d}^{X} :=
\{ e \in \NN^n \,:\,
\supp(e) \subset X, \;\; e \leq d
\}.
\end{align}

\begin{algorithm}
	\caption{Banded Loop Hafnian with Repetitions}
	\label{alg:hafrep}
	\begin{algorithmic}[1]
		%\Require{Matrix $A$, bandwidth~$w$, weight vector $s$}
		%\Ensure{Loop hafnian of $\Arep{s}$}
		\Procedure{LHafBandRep}{$A,w,s$}
		\State $X_1 := \{1\}$
		\State $H_1(\bar{d}) :=
		\begin{mycases}
		&1,
		&&\text{if } \bar{d} \!=\! 0
		\\[-3pt]
		&\tfrac{1}{\bar{d_1}!} \, T_{\bar{d}_1}(A_{11}),
		&&\text{if } \bar{d} \!\neq\! 0
		\end{mycases}
		\;\text{\algorithmicforall\ } \bar{d} \in \NN_{\leq s}^{X_1}
		$
		\For {$t = 2,\dots,n$}
		\State $a_{t} := \max\{t{-}2w, 1\}$,\;\; $X_t := \{a_{t},\dots,t\}$
		\State $\tilde H_{t-1\!}(\bar{d}) :=
		\begin{mycases}
		&0,
		&&\text{if } \bar{d}_{t} {>} 0
		\\[-3pt]
		&h(\bar{d}),\!\!
		&&\text{if } \bar{d}_{t} {=} 0
		\end{mycases}
		\;\text{\algorithmicforall\ } \bar{d} \in \NN_{\leq s}^{X_t}
		$
		\Statex \,\qquad\;\;\, where
		$h(\bar{d}) = H_{t-1}(\bar{d} + s|_{\{a_{t}{-}1\}})$
		\State $G_{t}(\bar{d}) :=
		\begin{mycases}
		&0,
		&\text{if } \bar{d} \!\notin\!\! \mathcal{D}
		\\[-3pt]
		&g(\bar{d}),
		&\text{if } \bar{d} \!\in\!\! \mathcal{D}
		\end{mycases}
		\;\text{\algorithmicforall\ } \bar{d} \in \NN_{\leq s}^{X_t}
		$
		\Statex \,\qquad\;\;\, where
		$Y_t := X_t \setminus \{t\},$\;\;
		$\sigma_t(\bar{d}) := \bar{d}_t - \sum_{i \in Y_t} \bar{d}_i,$
		\Statex \,\qquad\;\;\,
		$g(\bar{d}) =
		\tfrac{1}{\sigma_t(\bar{d})!} \, T_{\sigma_t(\bar{d})}(A_{t t}) \cdot
		\textstyle\prod_{i \in Y_t} \tfrac{1}{\bar{d}_i!}\, (A_{i t})^{\bar d_{i}},
		$
		\Statex \,\qquad\;\;\, and
		$\mathcal{D} := \{ \bar{d} :
		\sigma_t(\bar{d}) \geq 0\}$
		\State $H_{t} := $\Call{Convolution}{$G_{t}, \tilde H_{t-1},X_t$}
		\EndFor
		\State \Return $s! \cdot H_n(s|_{X_n})$
		\EndProcedure
		\Procedure{Convolution}{$H',H'',X$}
		\State $H(\bar{d}) := \displaystyle\sum\limits_
		{\substack{
				\bar{d}' + \bar{d}''= \bar{d}}}
		H'(\bar{d}') H''(\bar{d}'')$
		\;\; \algorithmicforall\ $ \bar{d} \in \NN_{\leq s}^{X}$
		\EndProcedure
	\end{algorithmic}
\end{algorithm}

Our dynamic program to compute $\lhaf(\Arep{s})$ is given in Algorithm~\ref{alg:hafrep}.
For each $t \in [n]$ we compute a table $H_{t}$.
The loop hafnian of $\Arep{s}$ can be obtained from the last table~$H_n$.
For each $t \in [n]$, let $a_t, X_t, \Delta_t$ be as in~\eqref{eq:Xt}.
The table $H_{t}$ is indexed by vectors $\bar{d} \in \NN_{\leq s}^{X_t}$.
Note that the number of such vectors is
$\prod_{i \in X_t} (s_{i}{+}1) \leq (c{+}1)^{2w+1}$,
where $c := \! \max\{s_1,\dots,s_n\}$.
The entries $H_{t}(\bar{d})$ of the table are scaled subhafnians.
Consider the collection
\begin{align}
\mathcal{S} = \{d \in \NN^n:
\Delta_{t} \subset \sat(d)
\subset \supp(d) \subset [t]
\}.
\end{align}
Observe that $d \!\in\! \mathcal{S}$ is completely determined by the restriction to~$X_t$.
So if we let $\bar{d}:=d|_{X_t}$, there is a one to one correspondence between $\mathcal{S}$ and $\NN_{\leq s}^{X_t}$.
The scaled subhafnians that we are interested in are
\begin{align*}
H_{t}(\bar{d}) :=
\mylhaf(d) =
\mylhaf(\bar{d} + s|_{ \Delta_{t}}),
\quad\text{ for }
\bar{d} \in \NN_{\leq s}^{X_t}.
\end{align*}
In particular, $H_{T}(s|_{X_n}) = \mylhaf(s) = \frac{1}{s!}\lhaf(\Arep{s})$.

The recursion used in Algorithm~\ref{alg:hafrep} relies on the next lemma.

\begin{lemma}\label{thm:recursion2}
	Let $t>1$ and let $d \in \NN^n$ be such that
	\begin{align*}
	\Delta_{t} \subset \sat(d)
	\subset \supp(d) \subset [t],
	\qquad
	d \leq s.
	\end{align*}
	Denoting
	$Y_t := X_t \!\setminus\! \{t\}$
	and $\sigma_t(d') := d'_t -\sum_{i \in Y_t} d'_i$,
	let
	\begin{gather*}
	g(d') :=
	\textstyle\frac{1}{\sigma_t(d')!} \, T_{\sigma_t(d')}(A_{tt}) \cdot
	\prod\nolimits_{i \in Y_t} \frac{1}{d'_i!}\, (A_{it})^{d_{i}'},
	\\
	\mathcal{D} := \{ d' \!\in\! \NN_{\leq d}^{X_t} :
	d'_{t} \!=\! d_{t}, \sigma_t(d')\!\geq\! 0 \}.
	\end{gather*}
	Then
	\begin{gather*}
	\mylhaf(d)
	= \sum_{d' \in \mathcal{D}}
	g(d')\; \mylhaf(d {-} d'),
	\\
	\Delta_{t-1} \subset \sat(d{-}d')
	\subset \supp(d{-}d') \subset [t{-}1]
	\;\;\;\forall d' \!\in\! \mathcal{D}.
	\end{gather*}
\end{lemma}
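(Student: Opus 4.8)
The plan is to prove Lemma~\ref{thm:recursion2} by mimicking the combinatorial bijection argument of Lemma~\ref{lem:recursion}, but now at the level of $s$-matchings (weighted multi-matchings) rather than ordinary matchings. Recall that $\mylhaf(d) = \sum_{\tau \in \myPM(d)} \frac{1}{\tau!} A(\tau)$, where $\tau$ ranges over degree-$d$ multi-matchings of the banded graph $G$. The key observation is that, because $A$ has bandwidth $w$ and $d$ saturates all indices below $a_t$, \emph{every} edge of $\tau$ incident to vertex $t$ must have its other endpoint in $X_t = \{a_t,\dots,t\}$ (an edge $it$ with $i < a_t$ would require $|i-t| > 2w \geq w$, hence $A_{it}=0$; and the saturation condition $\Delta_t \subset \sat(d)$ is what lets us later split off the first $a_t-1$ coordinates cleanly). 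So I would decompose each $\tau \in \myPM(d)$ by recording the ``local part at $t$'': the loop multiplicity $\tau_{tt}$ together with the multiplicities $\tau_{it}$ for $i \in Y_t = X_t \setminus\{t\}$. This local part is encoded by a weight vector $d' \in \NN^n$ supported on $X_t$ with $d'_t = d_t$ (all $d_t$ half-edges at $t$ are used up) and $d'_i = \tau_{it}$ for $i \in Y_t$; the loop count is then $\sigma_t(d') = d'_t - \sum_{i\in Y_t} d'_i = \tau_{tt}$, which must be $\geq 0$, giving exactly the set $\mathcal{D}$.

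Next I would verify the bijection $\myPM(d) \;\longleftrightarrow\; \bigsqcup_{d' \in \mathcal{D}} \myPM(d-d')$: removing all edges incident to $t$ from $\tau$ leaves a multi-matching of degree $d - d'$ on $G|_{[t-1]}$, and conversely any element of $\myPM(d-d')$ together with the prescribed local data $d'$ reassembles uniquely to a $\tau \in \myPM(d)$. Then I would check the multiplicativity of the weights: $\frac{1}{\tau!}A(\tau) = g(d') \cdot \frac{1}{(\tau - \text{local})!}A(\tau - \text{local})$, where $g(d')$ collects precisely the contribution of the edges at $t$. Here is where the scalar loop-hafnian numbers $T_k(a)$ enter: the $\tau_{tt} = \sigma_t(d')$ loops at $t$ contribute $\frac{1}{\sigma_t(d')!}(A_{tt})^{\sigma_t(d')}$ naively, but because the diagonal entry of $\Arep{s}$ is repeated, the honest bookkeeping (cf.\ the general-case argument in Lemma~\ref{thm:hafrepeated}, where $\sum_{\pi \in f^{-1}(\tau)}\Arep{s}(\pi) = \frac{s!}{\tau!}A(\tau)$ rather than a naive product) forces the factor $T_{\sigma_t(d')}(A_{tt})$ in place of a plain power — this is exactly the $n=1$ subcase already recorded. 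The non-loop edges $it$, $i \in Y_t$, simply contribute $\frac{1}{d'_i!}(A_{it})^{d'_i}$. Multiplying these gives $g(d')$ as stated, and summing over $\mathcal{D}$ and over $\myPM(d-d')$ yields the recursion.

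Finally I would dispatch the index-bookkeeping claim in the second displayed equation of the lemma: for $d' \in \mathcal{D}$ one has $\supp(d-d') \subset [t-1]$ because the only coordinate of $d$ that could drop to a nonzero-to-zero situation outside $[t-1]$ is coordinate $t$, and $d'_t = d_t$ kills it; one has $\sat(d-d') \supset \Delta_{t-1} = \{1,\dots,a_{t-1}-1\}$ because $a_{t-1} \leq a_t$, so $\Delta_{t-1} \subset \Delta_t \subset \sat(d)$, and $d'$ is supported on $X_t$ which is disjoint from $\Delta_{t-1}$ (as $a_{t-1}-1 < a_t$), hence $d$ and $d-d'$ agree on $\Delta_{t-1}$ and remain saturated there. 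This also confirms that $d-d'$ lies in the domain of the previous table $H_{t-1}$ and that the recursion can be rewritten in table form as $H_t(\bar d) = \sum_{d'} g(d') H_{t-1}(\overline{d-d'})$, which after the re-indexing by $s|_{\{a_t-1\}}$ is exactly what the \textsc{Convolution} step computes.

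The main obstacle I anticipate is the loop-factor bookkeeping: getting the exponents of $T_{\sigma_t(d')}(A_{tt})$ versus plain powers $(A_{tt})^{\sigma_t(d')}$ right, and making sure the combinatorial multiplicities $s!/\tau!$ inherited from Lemma~\ref{thm:hafrepeated} are consistently pushed through the recursion (so that the final answer is $s! \cdot H_n(s|_{X_n})$ and not off by a multinomial factor). Everything else — the bandwidth-and-saturation argument that localizes the edges at $t$, and the disjointness bookkeeping for $\Delta_{t-1}$ versus $X_t$ — is routine once the setup is in place.
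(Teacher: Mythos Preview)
Your approach is essentially identical to the paper's: both split each $\tau \in \myPM(d)$ into the ``local part at $t$'' (the edges incident to $t$, recorded as the canonical matching $\tau(d')$ with $\tau(d')_{it}=d'_i$ and $\tau(d')_{tt}=\sigma_t(d')$) and a remainder $\tau'' \in \myPM(d-d')$, verify that this gives a bijection $\bigcup_{d'\in\mathcal{D}}\myPM(d-d')\to\myPM(d)$, and then check that $\frac{1}{\tau!}A(\tau)=g(d')\cdot\frac{1}{\tau''!}A(\tau'')$ factorizes. Your index-bookkeeping argument for $\Delta_{t-1}\subset\sat(d-d')\subset\supp(d-d')\subset[t-1]$ also matches the paper's.

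One small clarification: the ``main obstacle'' you anticipate about $T_{\sigma_t(d')}(A_{tt})$ versus a plain power is not an obstacle at all. The factor $T_{\tau_{tt}}(A_{tt})$ is already built into the definition of $A(\tau)$ (recall $A(\tau)=\prod_{ii\in E_\ell}T_{\tau_{ii}}(A_{ii})\cdot\prod_{ij\in E_0}(A_{ij})^{\tau_{ij}}$), so $g(d')=\frac{1}{\tau(d')!}A(\tau(d'))$ drops out immediately. Since $\tau(d')$ and $\tau''$ have disjoint edge supports (the only potential overlap is the loop at $t$, but $\tau''_{tt}=0$ and $T_0=1$), the factorization of both $\tau!$ and $A(\tau)$ is automatic; no further multinomial bookkeeping is needed.
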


\begin{proof}[Proof of Lemma~\ref{thm:recursion2}]
	We start with the second equation.
	Let $d' \!\in\! \mathcal{D}$ and $d'':= d{-}d'$.
	Since $d'_{t} \!=\! d_{t}$ then $d''_{t} \!=\! 0$,
	and hence $\supp(d'') \!\subset\! [t{-}1]$.
	If $j \!\in\! \Delta_{t-1} \!\subset\! \Delta_{t}$ then $j \!\notin\! X_t$,
	and hence $d'_{j} \!=\! 0$, $d''_{j} \!=\! d_{j} \!\neq\! 0$.
	Therefore, $\Delta_{t-1} \subset \sat(d'')$.
	
	We proceed to the first equation.
	The proof is quite similar to that of Lemma~\ref{lem:recursion}.
	Given $d' \!\in\! \mathcal{D}$,
	consider the matching
	$\tau(d') \in \myPM(d')$
	defined as follows:
	\begin{align*}
	\tau(d')_{ij} :=
	\begin{cases}
	\sigma_t(d'), &\text{if }i \!=\! j \!=\! t ,\\
	d'_{i}, &\text{if }i \!\in\! Y_t,\, j \!=\! t, \\
	0, & \text{otherwise.}
	\end{cases}
	\end{align*}
	Note that
	$g(d') = A(\tau(d')) / (\tau(d')!) $.
	Consider the function
	\begin{align*}
	\bigcup_{d' \in \mathcal{D}} \myPM(d {-} d') &\;\to\; \myPM(d),
	\qquad
	\tau'' \;\mapsto\; \tau(d') + \tau''.
	\end{align*}
	It can be shown that this function is a bijection.
	Then
	\begin{align*}
	&\mylhaf(d)
	\,=\, \sum_{\tau \in \myPM(d)} \frac{1}{\tau!} \cdot A(\tau).
	\\
	&=\, \sum_{d' \in \mathcal{D}} \; \sum_{\tau'' \in \myPM(d-d')}
	\frac{1}{\tau(d')! \cdot \tau''!} \cdot A(\tau(d')) \cdot A(\tau'').
	\\
	&=\, \sum_{d' \in \mathcal{D}} g(d') \cdot \mylhaf(d\!-\!d') .
	\qedhere
	\end{align*}
\end{proof}

\begin{proof}[Proof of Theorem~\ref{thm:hafnian-rep}]
	We first prove correctness.
	The recursion formula in Lemma~\ref{thm:recursion2} is a \emph{convolution}
	over the vectors ${d} \!\in\! \NN^n$ supported on~$X_t$.
	Each iteration of Algorithm~\ref{alg:hafrep} performs such a convolution.
	It follows by induction on $t$ that the values $H_{t}(\bar{d})$ computed by Algorithm~\ref{alg:hafrep} are indeed given by $\mylhaf(\bar{d} + s|_{\Delta_{t}})$,
	so the algorithm is correct.
	
	We proceed to estimate the complexity.
	We only analyze the cost of the convolution,
	since this is the dominant term.
	Recall that the fast Fourier transform allows us to compute circular convolutions.
	We can avoid the circular effect by appending some zeros.
	It follows that the (non-circular) convolution can be computed in
	$ {O}\left( w \, (2 c {+}2)^{2w{+}1} \log c \right) $,
	where $c := \! \max\{s_1,\dots,s_n\}$.
	The running time of the whole algorithm is
	$ {O}\left( n\, w \, (2 c {+} 2)^{2w{+}1} \log c \right) $.
\end{proof}

The complexity in Theorem~\ref{thm:hafnian-rep} depends on the largest entry of~$s$.
This is not convenient if, for instance, there is a single large entry.
Denoting $S_t := \prod_{i=t}^{t+2 w} (2 s_{i} {+} 2)$
and $C := \! \max\{S_1,\dots,S_n\}$,
our bound can be refined to
${O}^*(n\, C)$. An analogous results exists for generic (as opposed to banded) matrices derived by Kan~\cite{kan2008moments}. 

Although Theorem~\ref{thm:hafnian-rep} is stated only for complex matrices,
Algorithm~\ref{alg:hafrep} can be applied in more general rings.
The complexity remains the same as long as the ring admits a fast Fourier transform.

\bibliography{shallowGBS}
\end{document}